\documentclass[journal]{IEEEtran}
\usepackage{caption}
\ifCLASSINFOpdf
\else
   \usepackage[dvips]{graphicx}
\fi
\usepackage{booktabs}
\usepackage{url}
\usepackage{float}
\usepackage{amsmath}
\usepackage{amssymb}
\usepackage{graphicx}
\usepackage{color}
\usepackage[table]{xcolor}
\definecolor{khaki}{RGB}{240,230,140}
\usepackage{url}
\usepackage[nospace,compress]{cite}
\usepackage{amsbsy}
\usepackage{epsfig}
\usepackage{subcaption}
\usepackage{multirow}

\makeatletter

\def\bx{{\mathbf x}}

\def\b0{{\mathbf 0}}

\newcommand{\beq}{\begin{equation}}
\newcommand{\eeq}{\end{equation}}

\def\ba{\mbox{\boldmath $a$}}

\def\bn{\mbox{\boldmath $n$}}

\def\by{\mbox{\boldmath $y$}}

\def\bx{\mbox{\boldmath $x$}}

\def\bs{\mbox{\boldmath $s$}}

\def\by{\mbox{\boldmath $y$}}

\def\mx{\mbox{$\mathbf{x}$}}
\def\mb{\mbox{$\mathbf{b}$}}

\def\mB{\mbox{$\mathbf{B}$}}

\def\mX{\mbox{$\mathbf{X}$}}

\def\mI{\mbox{$\mathbf{I}$}}
\def\mL{\mbox{$\mathbf{L}$}}

\def\mU{\mbox{$\mathbf{U}$}}

\def\ms{\mbox{$\mathbf{s}$}}
\newcommand{\ds}{\displaystyle}

\newtheorem{theorem}{\textbf{Theorem}}
\newtheorem{proposition}{Proposition}
\newtheorem{definition}{Definition}

\newenvironment{proof}[1][Proof]{\noindent \textbf{#1.} }{\qedsymbol}
\newcommand{\qedsymbol}{\hspace{\fill}\rule{1.5ex}{1.5ex}}
\floatstyle{ruled}
\newfloat{algorithm}{tbp}{loa}
\providecommand{\algorithmname}{Algorithm}
\floatname{algorithm}{\protect\algorithmname}

\usepackage{amssymb}
\hyphenation{op-tical net-works semi-conduc-tor}

\begin{document}

\title{Topological Signal Processing Over Cell MultiComplexes Via Cross-Laplacian Operators}

\author{Stefania Sardellitti, \IEEEmembership{Senior Member, IEEE},  Breno C. Bispo, Fernando A. N. Santos, Juliano B. Lima, \IEEEmembership{Senior Member, IEEE}
\thanks{
This work was supported  by the  FIN-RIC Project 
TSP-ARK, financed by Universitas Mercatorum under grant n. 20-FIN/RIC. Additional support was provided by CAPES (88881.311848/2018-01, 88887.899136/2023-00), CNPq (442238/2023-1, 312935/2023-4, 405903/2023-5), and FACEPE (APQ-1226-3.04/22). This paper will be presented in part at the 33rd European Signal Processing Conference (EUSIPCO) 2025, Palermo, Italy. Sardellitti is with the Dept. of Engineering and Sciences, Universitas Mercatorum, Piazza Mattei 10, 00186, Rome, Italy. Bispo  and Lima are with Dept. of Electronics and Systems, Federal University of Pernambuco, Recife, Brazil. Santos is with the  Dutch Institute for Emergent Phenomena, KdVI, University of Amsterdam, Amsterdam, The Netherlands. E-mails: stefania.sardellitti@unimercatorum.it,\{breno.bispo, juliano.lima\}@ufpe.br, f.a.nobregasantos@uva.nl.}}

\maketitle

\begin{abstract}
One of the key challenges in many research fields is uncovering how different interconnected systems interact within complex networks,  typically represented as multi-layer networks. Capturing the intra- and cross-layer interactions among different domains for analysis and processing calls for topological algebraic descriptors capable of localizing the homologies of different domains, at different scales,  according to the learning task. Our first contribution in this paper is to introduce the Cell MultiComplexes (CMCs), which are novel topological spaces that enable the representation of higher-order interactions among interconnected cell complexes. We introduce cross-Laplacian operators as powerful algebraic descriptors of CMC spaces able to capture different topological invariants, whether global or local, at different resolutions. Using the eigenvectors of these operators as bases for the signal representation, we develop topological signal processing tools for signals defined over CMCs. Then, we focus on the signal spectral representation and on the filtering of noisy flows observed over the cross-edges between different layers of CMCs. We show that a local signal representation based on cross-Laplacians  yields a better sparsity/accuracy trade-off compared to monocomplex representations, which provide overcomplete representation of local signals. Finally, we illustrate a topology learning strategy designed to infer second-order cross-cells between layers, with applications to brain networks for encoding inter-module connectivity patterns.  




 \end{abstract}

\begin{IEEEkeywords} 
Topological signal processing, cell complexes, Laplacians, algebraic topology, Hodge decomposition, multilayer networks, Betti numbers.
\end{IEEEkeywords}

\IEEEpeerreviewmaketitle

\section{Introduction}

 The study of complex networks has recently emerged as a vibrant area of research, offering powerful tools for analyzing  complex relationships within heterogeneous systems \cite{boccaletti2006complex},\cite{strogatz2001exploring}. Complex networks  are composed of multiple interconnected subsystems  that interact  through  relationships  having    different meanings  and often operating at different scales. %
Typically, these networks are composed of heterogeneous  domains organized in different layers of connectivity. 

In the last decades multilayer networks  \cite{kivela2014multilayer},\cite{boccaletti2006complex}, \cite{de2013mathematical}, \cite{bianconi2018multilayer} have gained a lot of research interest  due to their ability to model complex systems. Unlike common single-layer networks, multilayer networks account for relationships within and  among multiple layers of connectivity, where each layer may represent a sub-network, a different domain or a distinct snapshot of the same domain. 
Many human-made networks, such as power, telecommunication, social and transportation networks, as well as a plethora of natural phenomena, exhibit a sophisticated, highly interdependent structure well described by multilayer networks. For example, in telecommunication and transportation networks \cite{CRAINIC20221}, \cite{racz2024multilayer} multilayer networks are efficient tools to select physical and logical paths and optimize flows.
Social networks \cite{dickison2016multilayer} are one of the prominent examples of multilayer networks, where social entities are linked due to a social tie and each layer represents a different type of relationship.
In neuroscience, the hierarchical structure of the brain connectomes can be properly modeled by multilayer networks \cite{10.1093/gigascience/gix004}, where different interconnected layers correspond to distinct modes of brain connectivity, potentially able to capture the interplay among modules better  than a monolayer brain perspective \cite{breedt2023multimodal}. In biological molecular networks \cite{liu2020robustness},\cite{zhao2025constructing}, multilayer networks are suitable tools for modeling multiple biochemical interactions, as 
protein-gene-metabolite interactions.

Multilayer networks are modeled through graphs where the nodes within  and among distinct layers are connected through intra- and inter-layer  edges, respectively. However, many complex networks exhibit not just simple dyadic relations between entities, but   higher-order interactions involving  groups of entities.
Hence, since multilayer graphs are able to capture only pairwise relationships between couple of nodes, they fail to represent interactions involving groups of nodes. 
Recently, in \cite{krishnagopal2023topology} the authors introduced multiplex simplicial networks where  each
layer of the  network includes higher-order interactions modeled through simplicial complexes. Simplicial complexes \cite{munkres2018elements} are topological spaces able to capture higher-order interactions among the elements of a set under the inclusion property, i.e., if a set belongs to the complex, then all its subsets  also  belong to the complex.
These topological domains are  algebraically  represented through the so called higher-order Hodge Laplacian matrices \cite{Lim} which are  algebraic descriptors able to capture global invariants of the space, i.e. properties that keep unchanged under homeomorphic transformations of the space.

In \cite{barb_2020} the framework Topological Signal Processing (TSP) was introduced for the processing of signals defined over simplicial (mono-)complexes  by extending the classical signal processing tools such as spectral representations, filtering, sampling and recovering to data observed over topological spaces represented using higher-order Hodge Laplacians.  This framework was extended to cell complex structures in \cite{sardellitti2024topological}. In \cite{Roddenberry_23} the authors proposed a framework for signal processing on product spaces of
simplicial and cellular complexes leveraging the structure of  the Hodge Laplacian of the product space to jointly
filter along time and space.
Nevertheless, in multilayer networks, the Hodge Laplacians provide a global representation of the  space that fails to  localize homology across layers. Recently, the authors in \cite{moutuou2023} introduced an interesting representation of simplicial multi-complex networks based on the so called cross-Laplacian operators.
These algebraic descriptors provide a lens of different resolution for observing a multi-complex network composed of distinct layers where both the intra- and inter-layer interactions are modeled using simplicial complexes.
Interestingly, the cross-Laplacians are powerful algebraic tools to capture local or global topological invariants encoded by the so-called cross-Betti vectors.\\ 
Building a representation of the signals based on the cross-Laplacians, our goal in this paper is to extend topological signal processing to  novel layered topological higher-order domains that we named Cell MultiComplexes (CMCs), able of capturing relationships of any sparsity order among data.  
Our main contributions can be summarized as follows: 
\begin{itemize}
    \item[1)] We first introduce the novel CMC topological spaces that are a collection of cell complexes, each associated with a  layer,   interconnected by higher-order topological spaces named cross-complexes. Hence, we extend the algebraic representation of simplicial multi-complexes based on cross-Laplacians developed in \cite{moutuou2023} to  cell multi-complexes structures.
    The cross-Laplacians  enable the extraction of the local or global topological invariants according to the scale we aim to explore: a global perspective handling the entire complex as a flattened monolayer structure or a local lens which disentangles the homologies to study as the topology of a layer is related to the others. In this first study, we focus on $(0,0)$ cross-Laplacians.
    \item[2)]  We show how  signals observed on CMC spaces  admit a Hodge-based decomposition  in three orthogonal components. These components provide a local physical-based interpretation of the solenoidal, irrotational and harmonic edge signals enabling the introduction of cross-divergence and cross-curl operators. Using the eigenvectors of the cross-Laplacians as signal bases, we show how they enable a low-dimensional spectral representation of local signals by avoiding the overcomplete representation derived by a monocomplex approach. Then, we    
    extend the TSP framework developed in \cite{sardellitti2024topological} for a single cell-complex to CMCs spaces. We propose methods to find the optimal signal sparsity/accuracy trade-off and for  filtering  the signal components  from noisy observations. Furthermore, we  infer the structure   of $2$-order cross-complexes by considering an interesting real-data application to brain networks in order to learn inter-modules connectivity. 
\end{itemize}

Some preliminary results of our work were presented in  \cite{SardellittiCMC2024}. Here, we extend the work  in \cite{SardellittiCMC2024}  providing theoretical results for the representation of CMCs through cross-Laplacians  and for the cross-invariants, showing as sparse signal spectral representations can be derived, assessing the performance gain in terms of low-dimensional dictionaries when the processing of local signals  is required, proposing topology inference methods applied to brain networks.

The paper is organized as follows. In Section \ref{sec: Cell_multicomplexes} we 
introduce cross-complex topological spaces, while Section \ref{sec: Alg_rep} illustrates the algebraic framework for representing CMCs. 
In Section \ref{sec: Multilayer_graphs} we explore  multi-layer graphs from the perspective of cross-Laplacian-based representations and in Section \ref{sec: CMC_in_edge_space} we focus on  $2$-order cell multicomplexes by introducing  the cross-Betti numbers.
Section \ref{sec: signal_proc} provides the spectral representation of signals over CMCs,  finding the optimal trade-off between the sparsity of the signal representation and the data fitting error. In Section \ref{sec: signal_estim} we illustrate a method to estimate cross-edge signals, while in Section \ref{sec: signal_learning}  a learning strategy is presented to infer the structure of CMCs by validating its effectiveness in exploring the inter-modules connectivity in brain networks. Finally, in Section \ref{sec: conclusion} we draw some conclusions.
\vspace{-0.3cm}





\section{Cell MultiComplex Spaces}
\label{sec: Cell_multicomplexes}

In this section, we introduce the basic notions  defining Cell MultiComplex (CMC) spaces. 
We advance the topological tools developed in \cite{moutuou2023} for representing  simplicial complex-based multilayer networks, extending them to encompass more general and expressive topological structures, such as cell complexes. While simplicial complexes allow for the representation of relations of arbitrary order, they are constrained by the inclusion property, meaning that if a set belongs to the space, all of its subsets  must also be elements of the space. Relaxing this inclusion property, cell complexes are more general topological spaces which enable to capture relationships of any degree of sparsity  among elements of the space. 
Hence, we first recall the notion of cell complexes \cite{sardellitti2024topological},\cite{klette2000cell}.    Then, we introduce  cell multi-complexes that represent novel topological spaces defined as collections of cross-connected cell complexes. CMCs  provide a flexible framework  for capturing both intra- and inter-relations across interconnected domains organized into distinct layers of connectivity.
\vspace{-0.3cm}

\subsection{Cell complexes}
An  abstract cell complex (ACC) is a finite partially ordered set (poset), equipped with a dimension function,
 whose fundamental elements are called cells. Formally, a cell complex is defined as follows \cite{klette2000cell},\cite{sardellitti2024topological}.
 \begin{definition}\textit{
An abstract cell complex $\mathcal{C}=\{\mathcal{S},  \prec_b, \text{dim}\}$ is a set $\mathcal{S}$ of abstract elements $c$, named cells,
provided with a binary relation $\prec_b$, called the bounding (or incidence) relation, and with a dimension function, denoted by $\mbox{dim}(c)$, that assigns to each $c \in \mathcal{S}$ a non-negative integer $[c]$. Given the cells $c,w,z \in \mathcal{S}$,  the following two axioms are satisfied:
\begin{enumerate}
    \item[1)]  if $c \prec_b w$ and $w \prec_b z$, then $ c \prec_b z$  follows (transitivity);
    \item[2)] if $c \prec_b w$, then $\text{dim}(c)<\text{dim}(w)$ (monotonicity).
\end{enumerate}}
\end{definition}
A cell $c$ is called a $k$-cell if  the dimension (or order) of $c$ is $k$, i.e.
 $\text{dim}(c)=k$. 
  A cell of order $k$ is denoted by $c_k$. Therefore, 
 $0$-cells $c_0$ are named vertices and $1$-cells $c_1$ edges. With a slight abuse of notation, we refer to  $2$-order cells bounded by an arbitrary number of $1$-order cells as polygons, even though these are not necessarily embedded in a Euclidean space.\\ We say that the cell $c_k$ lower bounds the cell $c_{k+1}$ 
 if $ c_k \prec_b c_{k+1}$
and $c_k$ is  called a face of $c_{k+1}$. 
 An ACC is of dimension $K$, if the maximum dimension of  its cells is equal to $K$. Note, that an ACC of dimension $1$ is a graph.
 Given a $k$-order cell $c_k$,
 we define its boundary as the set of all cells of order less than $k$ that bound $c_k$. 
 An ACC equipped with neighboring relations among the cells of the complex is a
 topological space \cite{barmak2011algebraic}, i.e. a set of elements along with an ensemble of neighboring relations among them. A cell complex becomes  a simplicial complex (SC) if its cells are constrained to satisfy the inclusion property. Specifically, in simplicial complex spaces, a $k$-order cell is called a $k$-simplex.\\ 
In   Fig. \ref{fig:CMC_1}(a) we illustrate an example of an ACC of order $3$ composed of  triangles, 
squares, a  pentagon and a tetrahedron.

\vspace{-0.3cm}

\subsection{Cell MultiComplexes}
\label{Sec: CMC}
Let us now introduce the novel notion of cell multicomplex space.
\begin{definition}
\textit{A Cell MultiComplex (CMC) $\mathcal{X}$ is a topological space composed of a finite collection  of interdependent abstract cell complexes, each  associated with a topological layer. The interdependence among these complexes involve higher-order 
inter-layer interactions modeled by cross-complexes.}   \end{definition}
An illustrative example of a CMC is shown in  Fig. \ref{fig:CMC_1}(b). Although  it shares  the same  topological structure as the ACC in Fig.  \ref{fig:CMC_1}(a),  it is representative of data residing on different domains. Each layer can be associated, for instance, with a different network, a distinct domain, or a  snapshot of the same domain  at a different time, allowing the cell multicomplex to represent relationships among signals within or across domains.

The inter-layer higher-order interactions  are captured by cells of different orders  named \textit{cross-cells}.  Cross-cells of order $1$, $2$, $3$ are cross-edges, cross-polygons and cross-polyhedra, respectively. The dimension of a CMC is the  maximum order of its cells.  The CMC of order $3$ depicted in  Fig. \ref{fig:CMC_1}(b)  consists of  $L=3$ layers.  Specifically, it is composed of three intra-layer cell complexes, denoted by $\mathcal{X}^{1}, \mathcal{X}^{2}$ and $\mathcal{X}^{3}$, interconnected by cross-edges shown as dashed lines. We can observe  three cross-cells of order $2$, a  triangle, a square and  a pentagon, between layers $1$ and $2$, and  a  tetrahedron, between layers $2$ and $3$.\\
Let us consider the network layers indexed according to an (arbitrary) increasing order. For simplicity of notation and w.l.o.g., let us assume that cross-cells involve only a couple of layers,   as in the example illustrated in Fig. \ref{fig:CMC_1}(b). Hence, we denote by  $c_{k}^{\ell,m}(i)$  the $i$-th cross-cell of order  $k > 0$, interconnecting layers $\ell$ and $m$.
 Furthermore, we denote by $c_{k}^{\ell}$ the intra-layer cells, i.e. cells of order $k$ within layer $\ell$. 

Given the cross-cell $c_{k}^{\ell,m}(i)$, we define its $\ell$-layer faces and $m$-layer faces  as the cells of order $0 \leq j<k$ that lower bound $c_{k}^{\ell,m}(i)$ and belong to the $\ell$-layer and $m$-layer, respectively.


  In the  example shown  in Fig. \ref{fig:CMC_1}(b), the 2-order cross-cell $c_2^{1,2}(1)$   is a cross-triangle, connecting layers $1$ and $2$,  with bounding $1$-order cells $c_{1}^{1}(1),c_{1}^{1,2}(1), c_{1}^{1,2}(2)$. The  face  of $c_2^{1,2}(1)$ on layer $1$ is the edge $c_1^{1}(1)$, while its face on layer $2$ is the node $c_0^{2}(1)$. Note that, in general, cross-cells may have faces of different orders on each layer.
  
 Therefore, a cell multicomplex $\mathcal{X}$ is defined as  a collection of intra- and cross-layer complexes.
We denote by $\mathcal{X}^{\ell}$ the  cell complex within layer $\ell$ and by $\mathcal{X}^{\ell,m}$ the cross-complex  composed of the cross-cells inter-connecting layers $\ell$ and $m$.  Furthermore, we define the cross-complex $\mathcal{X}_{k,n}^{\ell,m} \subseteq \mathcal{X}^{\ell,m}$ as the collection of cross-cells with faces of order $k$ in layer $\ell$ and with faces of order $n$ in layer $m$. Hence, we denote by $N_{k,n}^{\ell,m}$ the number of cross-cells in $\mathcal{X}_{k,n}^{\ell,m}$, i.e. $N_{k,n}^{\ell,m}=|\mathcal{X}_{k,n}^{\ell,m}|$. Extending this notation, we define  with $\mathcal{X}_{k,-1}^{\ell,m}$    the   intra-layer cell complex of order $k$ in layer $\ell$, where the subscript $-1$ indicates no cells over layer $m$. Then, it holds that $\mathcal{X}^{\ell} \equiv \mathcal{X}_{K,-1}^{\ell,m}$ and $\mathcal{X}^{m} \equiv \mathcal{X}_{-1,K}^{\ell,m}$ where $K$ is the maximum dimension of the cells in the intra-layer complex.

\begin{figure}[t]
\centering
\includegraphics[width=8.5cm,height=7.2cm]{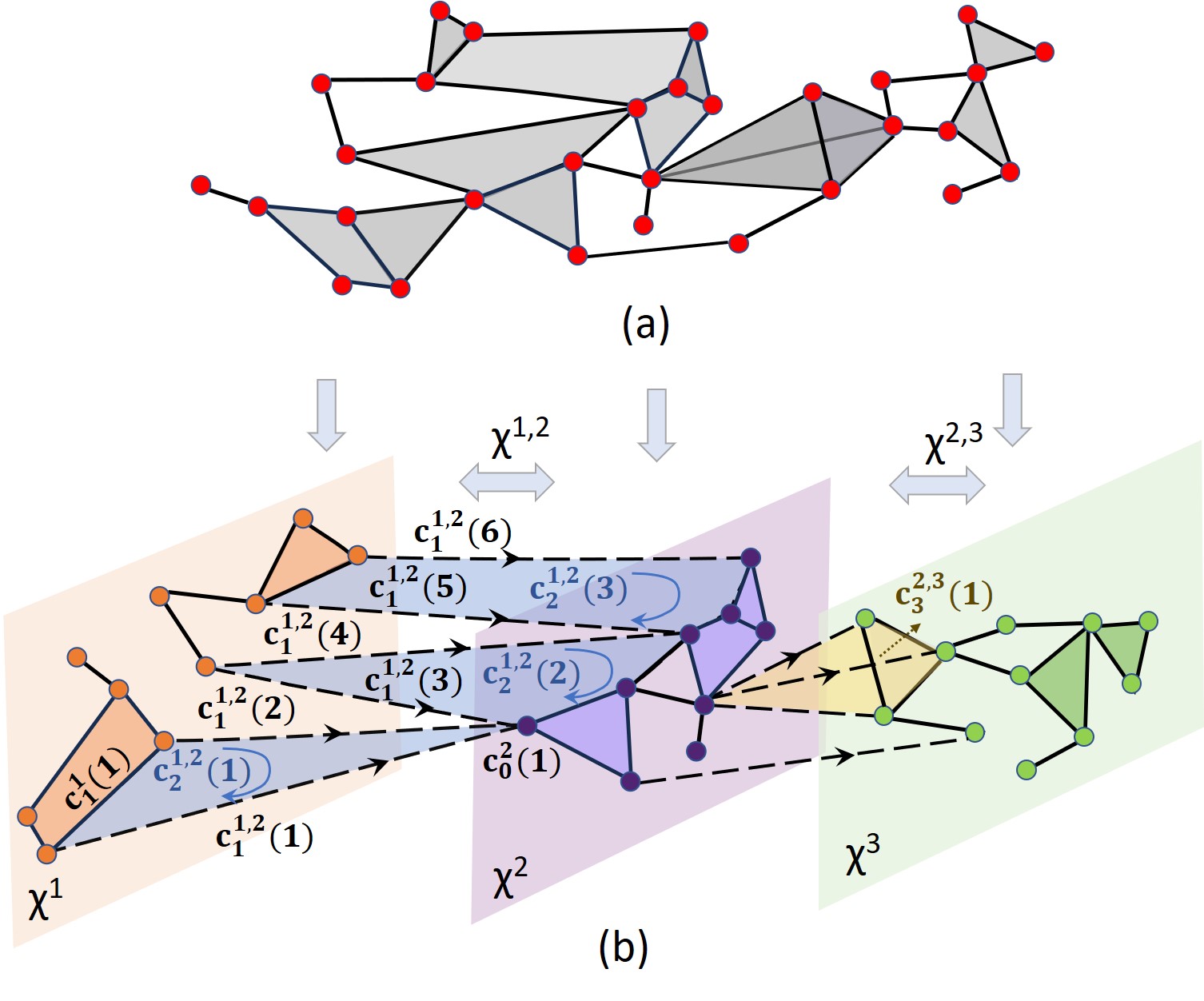}
\caption{An example of (a) an ACC of order $3$ and  (b) its CMC representation  with $L=3$ layers.}
\label{fig:CMC_1}
\end{figure}
Considering the CMC illustrated in  Fig. \ref{fig:CMC_1}(b) the cross-complex $\mathcal{X}^{1,2}$ between layers $1$ and $2$, is given by $\mathcal{X}^{1,2}=\{\mathcal{X}^{1,2}_{0,0},\mathcal{X}^{1,2}_{1,0},\mathcal{X}^{1,2}_{0,1},\mathcal{X}^{1,2}_{1,1}\}$ with  $\mathcal{X}^{1,2}_{0,0}= \{c_{1}^{1,2}(i)\}_{i=1}^{6}$, $\mathcal{X}^{1,2}_{1,0}=\{c_2^{1,2}(1)\}, \mathcal{X}^{1,2}_{0,1}=\{c_2^{1,2}(2)\}$,  and, finally, 
 $\mathcal{X}^{1,2}_{1,1}= \{c_2^{1,2}(3)\}$. Note that the complex $\mathcal{X}_{0,0}^{1,2}$ is the cross-graph, i.e. a cell complex of order $1$, while $\mathcal{X}^{1,2}_{1,0}$, $\mathcal{X}^{1,2}_{0,1}$ and $\mathcal{X}^{1,2}_{1,1}$ are cross-complexes of order $2$.
\\
\textbf{Orientation of CMCs.} It is useful to introduce the orientation of the cells, a choice made only for the algebraic representation of the CMC. The orientation of a (cross-)cell is defined by choosing an ordering of its lower-dimensional  bounding cells and can be derived by generalizing the notion of simplex orientation for SC \cite{grady2010}.
Every simplex admits only two possible orientations, depending on the ordering of its elements. Two orderings represent the same orientation if they differ by an even number of transpositions, where a transposition is a  permutation of two elements \cite{munkres2000topology}.
To define the orientation of a $k$-order cell, we may apply a simplicial decomposition, 
which consists in partitioning the cell into a set of internal $k$-simplices \cite{grady2010}, \cite{sardellitti2024topological}.
We use the notation $c_{k-1}^{\ell,m}(i)  \sim c_{k}^{\ell,m}(j)$ to indicate that the orientation of $c_{k-1}^{\ell,m}(i)$ is coherent with that of  $c_{k}^{\ell,m}(j)$ and $c_{k-1}^{\ell,m}(i) \nsim c_{k}^{\ell,m}(j)$ to indicate that their orientations are opposite. An oriented $k$-order (cross-)cell $c_{k}^{\ell,m}$ of a CMC may be represented through its bounding $(k-1)$-cells  with two consecutive  cells sharing a common  $(k-2)$-cell. 
Given an orientation, there are two ways in which two cross-cells can be considered to be adjacent: lower and upper adjacent.
Two $k$-order (cross-)cells are lower adjacent if they share a common face of order $k-1$ and upper adjacent if they are both faces of a cell of order $k+1$.



 \section{Algebraic framework for CMCs: from global to local homologies}
\label{sec: Alg_rep}
In many emergent applications, from data science to machine learning, data resides on multiple interconnected domains or networks, and 
the objective may be uncover  global as well as local topological features that characterize the structure and relationships within and between these domains. 

  Depending on the learning task, we can adopt two main approaches  for the analysis of  signals over cell multicomplexes. 
  In the first one, the topological structure is treated as a monolayer  domain, so that the Hodge-Laplacian  introduced for representing cell complexes can be used to process  signals as in \cite{sardellitti2024topological}. In the second, novel approach,  we leverage cross-Laplacian matrices for signal representation in order to capture intra- and inter-layer homologies and reveal local topological invariants within the cell complex.
 
 \subsection{Cell multicomplexes: the monocomplex perspective}
 
 One of the common approaches for the analysis of multilayer networks   is to represent them as   monocomplex structures 
  to take advantage of well-established single-layer algebraic representation methods \cite{kivela2014multilayer},\cite{bianconi2018multilayer}.
 This allows to capture the topological invariants of the   entire network by treating it as a unique entity.

 Therefore,  the flattened multicomplex can be algebraically  represented  using  the  Hodge-Laplacian matrix \cite{sardellitti2024topological}. In the following, w.l.o.g., we focus on a $2$-order cell multi-complex $\mathcal{X}$.
Let us assume that the CMC consists of $L$ interconnected  layers, where the layer indices are ordered increasingly as  $1,2,...,L$. 
Denote by  $\mathcal{G}^{\ell}=(\mathcal{V}_{\ell},\mathcal{E}_{\ell})$ the intra-layer graph 
where  $\mathcal{V}_{\ell}$ and $\mathcal{E}_{\ell}$ are the sets of $N_{\ell}$ nodes and $E_{\ell}$ edges on layer ${\ell}$, respectively.
 To simply our notation we define the set $\mathcal{X}^{\ell,m}_{0,0}$ of cross-edges  connecting layer $\ell$ and $m$  as $\mathcal{X}^{\ell,m}_{0,0}=\mathcal{E}_{\ell,m}$  with $|\mathcal{E}_{\ell,m}|=E_{\ell,m}$.  Then, we can consider the CMC $\mathcal{X}$ as a single cell complex whose underlying graph is defined as $\mathcal{G}=(\mathcal{V},\mathcal{E})$, with $\mathcal{V}$ and $\mathcal{E}$, the set of nodes and edges in $\mathcal{X}$, respectively. Specifically, we have
 a total number of nodes  $|\mathcal{V}|=N$ and of  edges $|\mathcal{E}|=E$ with $N=\sum_{l=1}^{L} N_{l}$ and 
 $E=\sum_{l=1}^{L} E_{l}+\sum_{l=1}^{L}\sum_{m=1, m>l}^{L} E_{l,m}$.
 In order to introduce the incidence matrix $\mB_k$ describing which $k$-cells are upper adjacent to which $(k-1)-$cells, we define w.l.o.g.  an ordering of the cells in the complex. 
Specifically, we assume that the row indices of $\mB_1$ are associated in sequence with the nodes of each intra-layer graph $\mathcal{G}^{\ell}$, and the column indices are ordered as 
$E_1,E_{1,2},\ldots, E_{1,L},E_2,E_{2,3},\ldots,E_{L-1},E_{L-1,L}, E_L$.
Denoting with $P_i$ and $P_{i,j}$ for $i,j=1,\ldots,L$  the number of $2$-cells over layer $i$ and between layers $i$ and $j$, respectively, the $2$-order cells are ordered as $P_1, P_{1,2},\ldots, P_{1,L},P_2,P_{2,3},\ldots,P_{2,L},\ldots,P_{L-1},P_{L-1,L}, P_L$.\\
We assume w.l.o.g. that cross-edges are oriented from layer $\ell-1$ to layer $\ell$.
Therefore, the incidence (or boundary) matrix $\mB_k$ for $k=1,2$
can be defined as in \cite{sardellitti2024topological}:
\beq \label{B_1_mono}
  B_{k}(i,j)=\left\{\begin{array}{rll}
  0, & \text{if} \; c_{k-1}(i) \not\prec_b c_{k}(j) \\
  1,& \text{if} \; c_{k-1}(i) \prec_b c_{k}(j),    \; c_{k-1}(i) \sim c_{k}(j)\\
  -1,& \text{if} \; c_{k-1}(i) \prec_b c_{k}(j),    \; c_{k-1}(i) \nsim c_{k}(j)\\
  \end{array}.\right.
  \eeq 
  An important property of the boundary matrices is that the boundary of a boundary is zero, i.e. $\mB_{k} \mB_{k+1}=\mathbf{0}$ \cite{Lim}. 
Then, we can build the vertex-edge incidence matrix $\mB_1$, describing the lower incidences of the edges, and the  edge-polygon incidence matrix $\mB_2$, describing the upper incidences of the edges. Finally, we can represent the cell multicomplex $\mathcal{X}$ through the Hodge Laplacian matrices \cite{goldberg2002combinatorial}
\beq
\begin{array}{ccc}
&\mL_0=\mB_1\mB_1^T,  \qquad
& \mL_1=\mB_1^T\mB_1+\mB_2\mB_2^T
\end{array}
\eeq
where $\mL_0$ is the graph Laplacian, while $\mL_1$ is the first-order Laplacian matrix of the cell complex \cite{sardellitti2024topological}.
Specifically, the two orthogonal matrices $\mL_{1,d}=\mB_1^T \mB_1$ and  $\mL_{1,u}=\mB_{2} \mB_{2}^T$ are the lower and upper Laplacians, respectively, since they express the lower and upper adjacencies of the edges. 

One of the key properties of the Hodge Laplacian matrix $\mL_1$ is that it induces the so-called Hodge decomposition of the edge space $\mathbb{R}^{E}$ into three orthogonal subspaces as \cite{Lim}
\beq \label{eq: Hodge_all}
\mathbb{R}^{E} \equiv \text{img}(\mB_1^T) \oplus \text{ker}(\mL_1) \oplus \text{img}(\mB_2)
\eeq
where $\text{ker}(\mL_1)$  contains the vectors in both the $\text{ker}(\mB_1)$ and $\text{ker}(\mB_2^T)$.\\
\textbf{Betti numbers.}   The Hodge Laplacian representation of a CMC captures global invariants of the underlying topological spaces. 
The kernel of $\mL_1$
defines the homology group (invariants) of $\mathcal{X}$, which algebraically encodes the presence of holes or cycles that cannot be continuously deformed into boundaries \cite{Lim}. In particular, the dimension of the kernel of the $k$-order Hodge Laplacian, given by $\beta_{k}=\text{dim}(\text{ker}(\mL_k))$, is called the $k$-th Betti number. These numbers count the number of $k$-dimensional holes in the complex.
Specifically,  $\beta_{0}$  represents the number of connected components of the multilayer graph;    $\beta_{1}$ indicates  the number of $1$-dimensional holes in the entire complex, i.e. the number of empty $2$-cells within the complex; $\beta_2$ represents the number of cavities and so on \cite{hatcher2005algebraic}.\\
\textbf{Incidence matrix construction.} As an illustrative example, let us derive the incidence matrices for a network with $L=2$ layers. The node-edge incidence matrix $\mB_1$ can be written as:
\beq \label{eq:mono_B1}
\mB_1=\left[ \begin{array}{lll}
\mB_{1}^{(1)} & \mB_{1}^{1,(2)} & \mathbf{0}\\
\mathbf{0} & \mB_{1}^{(1),2} & \mB_{1}^{(2)} 
\end{array}\right]
\eeq
 where:
     i) $\mB_1^{(1)}$, $\mB_1^{(2)}$ are the node-edge incidence matrices of the graphs in layer $1$ and layer $2$, respectively;
     ii) $\mB_1^{1,(2)}$ (or $\mB_1^{(1),2}$)  has entries $1$ or $-1$, according to the chosen edge orientation, on the rows  corresponding to the vertices of layer $1$ (or $2$)  that are endpoints of the cross-edges between layers $1$ and $2$.  
The edge-polygon matrix $\mB_2$  is expressed as 
\beq \label{eq:mono_B2}
\mB_2=\left[ \begin{array}{lll}
\mB_{2}^{(1)} & \mB_{2}^{1,(2)} & \mathbf{0}\\
\mathbf{0} & \mB_{2}^{1,2} & \mathbf{0}\\
\mathbf{0} & \mB_{2}^{(1),2} & \mB_{2}^{(2)}
\end{array}\right]
\eeq
 where:
     i) $\mB_2^{(1)}$, $\mB_2^{(2)}$ are the edge-polygon incidence matrices of layer $1$ and layer $2$, respectively;
     ii) $\mB_2^{1,(2)}$ (or $\mB_1^{(1),2}$)  has entries $1$ or $-1$, according to the chosen edge orientation, in correspondence of the edges on layer $1$ (or $2$) bounding the  $2$-order cross-cells;
   iii) $\mB_2^{1,2}$ has entries $1$ or $-1$, according to the chosen edge orientation, in correspondence of the cross-edges between layers $1$ and $2$ bounding  $2$-order cross-cells.

\vspace{-0.3cm}
 \subsection{Cross-Laplacians to capture cross-invariants}
Although representing a cell multicomplex as a single complex is useful in contexts where global invariants of the space are analyzed, this representation often fails to grasp local features from data. Then,
representing data as a  monocomplex topological structure can lead to a loss of key topological information. 
 In many cases, depending on the learning task, is required a topological representation  that can disentangle the local and global homologies of the layers and reveal how the topology of one layer
affects and controls the topology of others. 
This perspective allows to see each layer as exhibiting   different topological properties depending on how it is explored: whether we look at each layer from its point of view, through the lens of other layers, or as a part of a whole aggregate structure.

In this section, we introduce the notion of cross-Laplacian matrices, originally presented in \cite{moutuou2023} for simplicial complexes, and extend it to the settings of cell multi-complexes.\\
\textbf{Cross-boundaries maps.}
  Let us first introduce the notion of boundary maps of cross-cells in the perspective of  a specific layer, i.e. the boundary maps of cross-cells only respect to faces  belonging to a given layer and keeping fixed all the remaining faces.
  
  Let us consider  two layers $\ell,m$ and denote by $C_{k,n}$ the real vector space generated by  all oriented $q$-order $(k,n)-$cross-cells $c_q^{\ell,m}$, with faces of order $k$ on layer $\ell$ and faces of order $n$  on layer $m$.
  \begin{table}[ht]
  \centering
   \caption{Orders $(k,n)$ of the faces on the layers of cross-cells of order $q\leq 3$.}
  \begin{tabular}{l|l}
    \hline
    
    Cell order $q$ & $(k,n)$   \\
    
    \hline
    $1$         &  $(0,0)$                  \\
    $2$         &  $$(1,0), (0,1), (1,1)$$      \\
    $3$         &  $$(1,1), (1,2) (2,1), (2,2)$$   \\
    \hline
  \end{tabular}
 
  \label{tab:q_cell_orders}
\end{table}  
  To simplify our notation, we omit the dependence of the cell's order $q$  on the orders $(k,n)$ of the faces on  layers $\ell$ and $m$, respectively. Considering a $2$-order CMC, for cross-edges we have $q=1$ and  $(k,n)=(0,0)$. For $q=2$, i.e. considering  $2$-order cross-cells, we can have $(k,n)=(1,0),(0,1),(1,1)$ and so on, as illustrated in the Table  \ref{tab:q_cell_orders}. Note that according to our notation, a pair $(k,n)$ may also correspond to cross-cells of different orders.

  Hence, given the cross-complex  $\mathcal{X}_{k,n}^{\ell,m}$  we can define two distinct cross-boundary operators for each cross-cell $c_{q}^{\ell,m}\in \mathcal{X}_{k,n}^{\ell,m}$, as each cross-cell can be viewed from two different perspectives: either from layer $\ell$ or from layer $m$.\\  
The first operator $\mB_{k,n}^{(\ell),m}$ is a boundary map defined with respect to the faces on layer $\ell$, while the second operator, denoted as $\mB_{k,n}^{\ell,(m)}$, is a boundary map with respect to the faces on layer $m$. We adopt a notation for the boundary operators that simplifies the construction of the $(0,0)$ cross-Laplacians introduced below, noting that, for further generalizations, the notation  must be tailored to the $(k,n)$ boundaries under study and to the perspective from which we analyze them. 
Here, we express the boundary map with respect to the cells of order $k$ on layer $\ell$ as view from layer $m$ as
$ \mB_{k,n}^{(\ell),m}: C_{k,n} \rightarrow C_{k-1,n}$.
Therefore, $\mB_{k,n}^{(\ell),m}$ is a boundary operator that maps cells $c_{q}^{\ell,m} \in \mathcal{X}_{k,n}^{\ell,m}$ to cells $c_{q-1}^{\ell,m} \in \mathcal{X}_{k-1,n}^{\ell,m}$. It is an incidence  matrix of dimension $N_{k-1,n}^{\ell,m} \times N_{k,n}^{\ell,m}$ with entries defined as 
    \beq \label{eq:B_kn_ell}
  B_{k,n}^{(\ell),m}(i,j)=\!\!\left\{\!\!\!\begin{array}{rll}
  0, & \! \text{if} \; c_{q-1}^{\ell,m}(i) \not\prec_b c_{q}^{\ell,m}(j) \medskip\\
  1,& \!\text{if} \; c_{q-1}^{\ell,m}(i) \prec_b c_{q}^{\ell,m}(j) , \; c_{q-1}^{\ell,m}(i) \sim c_{q}^{\ell,m}(j) \medskip\\
  -1,& \!\text{if} \; c_{q-1}^{\ell,m}(i) \prec_b c_{q}^{\ell,m}(j), \;   c_{q-1}^{\ell,m}(i) \nsim c_{q}^{\ell,m}(j)\\
  \end{array}\right. 
  \eeq
  where $c_{q}^{\ell,m}(i)\in \mathcal{X}_{k,n}^{\ell,m}$ and $c_{q-1}^{\ell,m}(j)\in \mathcal{X}_{k-1,n}^{\ell,m}$, $\forall i,j$. \\
   Similarly, we can define the boundaries matrices $\mB_{k,n}^{\ell,(m)}: C_{k,n} \rightarrow C_{k,n-1}$ of dimension 
   $N_{k,n-1}^{\ell,m} \times N_{k,n}^{\ell,m}$ with respect to faces on layer $m$, as 
    \beq \label{eq:B_kn_m}
  B_{k,n}^{\ell,(m)}(i,j)=\!\!\left\{\!\!\!\begin{array}{rll}
  0, & \! \text{if} \; c_{q-1}^{\ell,m}(i) \not\prec_b c_{q}^{\ell,m}(j)  \medskip\\
  1,& \! \text{if} \; c_{q-1}^{\ell,m}(i) \prec_b c_{q}^{\ell,m}(j), \; c_{q-1}^{\ell,m}(i) \sim c_{q}^{\ell,m}(j) \medskip\\
  -1,& \! \text{if} \; c_{q-1}^{\ell,m}(i) \prec_b c_{q}^{\ell,m}(j), \; c_{q-1}^{\ell,m}(i) \nsim c_{q}^{\ell,m}(j)\\
  \end{array}\right. 
  \eeq
  where $c_{q}^{\ell,m}(i)\in \mathcal{X}_{k,n}^{\ell,m}$ and $c_{q-1}^{\ell,m}(j)\in \mathcal{X}_{k,n-1}^{\ell,m}$, $\forall i,j$. 
  For consistency in our notation,  we denote  the intra-layer cell complexes on layer $\ell$ and $m$, by $\mathcal{X}_{k,-1}^{\ell, m}$ and $\mathcal{X}_{-1,n}^{\ell, m}$, respectively. Additionally, note that it holds 
\begin{align}
&\mB_{k,-1}^{\ell,(m)}=\mathbf{0}, \;\; 0\leq k \leq K \label{eq:B_k_null} \\ & \mB_{-1,n}^{(\ell),m}=\mathbf{0}, \;\;  0 \leq n \leq K \label{eq:B_k_null1} \\
& \mB_{0,-1}^{(\ell),m}=\mathbf{0},\; \;  \mB_{-1,0}^{\ell,(m)}=\mathbf{0} \label{eq:B_k_null2} 
\end{align}
where $K$ is the dimension of the CMC $\mathcal{X}$.
  The  key property stated in the following proposition  represents  the foundation of the homological structure of the CMCs we are introducing.
  \begin{proposition}
  Consider a CMC $\mathcal{X}$ composed by a set of (cross-)complexes $\mathcal{X}_{k,n}^{\ell, m} \subset \mathcal{X}$. Then, the following orthogonality conditions hold:
      \beq \label{eq:orth}
      \begin{split}
  &\text{i)} \; \mB_{k,n}^{(\ell),m}\mB_{k+1,n}^{(\ell),m}=\mathbf{0} \quad\\
  &\text{ii)}\,  \mB_{k,n}^{\ell,(m)}\mB_{k,n+1}^{\ell,(m)}=\mathbf{0}. 
  \end{split}\eeq 
    \end{proposition}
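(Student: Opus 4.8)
The plan is to prove that each composition vanishes by recognizing that the two cross-boundary families reduce, one layer at a time, to ordinary cellular boundary operators, so that the desired identities become instances of the classical ``boundary of a boundary is zero'' property $\mB_{k}\mB_{k+1}=\mathbf{0}$ recalled earlier from \cite{Lim}. Since statements (i) and (ii) are related simply by interchanging the roles of the two layers $\ell$ and $m$ (and correspondingly the two face orders $k$ and $n$), it suffices to establish (i); statement (ii) then follows verbatim under $\ell\leftrightarrow m$ and $k\leftrightarrow n$.

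The key observation is that $\mB_{k,n}^{(\ell),m}$ differentiates a cross-cell only with respect to its faces on layer $\ell$, lowering their order from $k$ to $k-1$, while holding the layer-$m$ faces fixed at order $n$. Consequently, for a fixed value of $n$, the family $\{\mB_{k,n}^{(\ell),m}\}_{k}$ together with the chain of spaces $\cdots \to C_{k+1,n}\to C_{k,n}\to C_{k-1,n}\to\cdots$ constitutes the chain complex of a single cell complex: the ``layer-$\ell$ slice'' of $\mathcal{X}^{\ell,m}$, whose cells are those of $\mathcal{X}_{\bullet,n}^{\ell,m}$, graded by the layer-$\ell$ face order $k$, and whose bounding relation is the restriction of $\prec_b$ to the layer-$\ell$ faces. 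This restricted relation inherits transitivity from the CMC and is monotone in $k$, so it satisfies the two axioms of an abstract cell complex, and $\mB_{k,n}^{(\ell),m}$ is precisely its $k$-th incidence matrix in the sense of \eqref{B_1_mono}. The property $\mB_{k}\mB_{k+1}=\mathbf{0}$ for cell complexes then yields (i) at once.

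To exhibit the mechanism explicitly, I would compute the $(i,j)$ entry
\[
[\mB_{k,n}^{(\ell),m}\mB_{k+1,n}^{(\ell),m}](i,j)=\sum_{r} B_{k,n}^{(\ell),m}(i,r)\,B_{k+1,n}^{(\ell),m}(r,j),
\]
where $c_{q-1}^{\ell,m}(i)\in\mathcal{X}_{k-1,n}^{\ell,m}$, $c_{q}^{\ell,m}(r)\in\mathcal{X}_{k,n}^{\ell,m}$ and $c_{q+1}^{\ell,m}(j)\in\mathcal{X}_{k+1,n}^{\ell,m}$. A summand is nonzero only when $c_{q-1}^{\ell,m}(i)\prec_b c_{q}^{\ell,m}(r)\prec_b c_{q+1}^{\ell,m}(j)$ with the layer-$m$ faces held at order $n$; for every such nested pair there are exactly two intermediate layer-$\ell$ faces $c_q^{\ell,m}(r)$, and, by the coherent-orientation convention encoded in the $\sim/\nsim$ relations, their two contributions carry opposite signs and cancel, leaving each entry equal to zero.

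The main obstacle is justifying this ``exactly two intermediate faces with opposite signs'' step, which is immediate for simplices but requires care for general cells, since cell complexes relax the inclusion property. I would handle it by appealing to the simplicial decomposition used to orient cells in \cite{grady2010},\cite{sardellitti2024topological}: the signed incidence numbers of each cross-cell agree with those of its internal simplicial decomposition, so the cancellation---and hence $\mB_{k,n}^{(\ell),m}\mB_{k+1,n}^{(\ell),m}=\mathbf{0}$---can first be verified on the decomposition and then lifted back to the cells. Equivalently, one simply invokes that the layer-$\ell$ slice is a bona fide cell complex, for which $\mB_{k}\mB_{k+1}=\mathbf{0}$ is already established; the only point genuinely needing verification is that this slice meets the cell-complex axioms, which is the delicate but otherwise routine part of the argument.
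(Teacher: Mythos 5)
Your proof is correct and follows essentially the same route as the paper's: both reduce the claim to the classical boundary-of-a-boundary identity $\mB_q\mB_{q+1}=\mathbf{0}$ by expanding each entry of the product as a signed sum over the intermediate faces $c_q^{\ell,m}(r)$ with $c_{q-1}^{\ell,m}(p)\prec_b c_q^{\ell,m}(r)\prec_b c_{q+1}^{\ell,m}(j)$ and observing that the contributions cancel, with (ii) obtained from (i) by interchanging the roles of the two layers. Your additional remarks---packaging the fixed-$n$ family as the chain complex of a layer-$\ell$ slice and invoking the simplicial decomposition to justify the sign cancellation for general (non-simplicial) cells---merely make explicit the step that the paper's Appendix A treats tersely when it asserts that the local orthogonality condition restricts to the cross- and intra-layer complexes, so the substance of the argument is the same.
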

    \begin{proof}
    See Appendix A.
    \end{proof}

\begin{figure}[t]
\centering
\includegraphics[width=8cm,height=5.0cm]{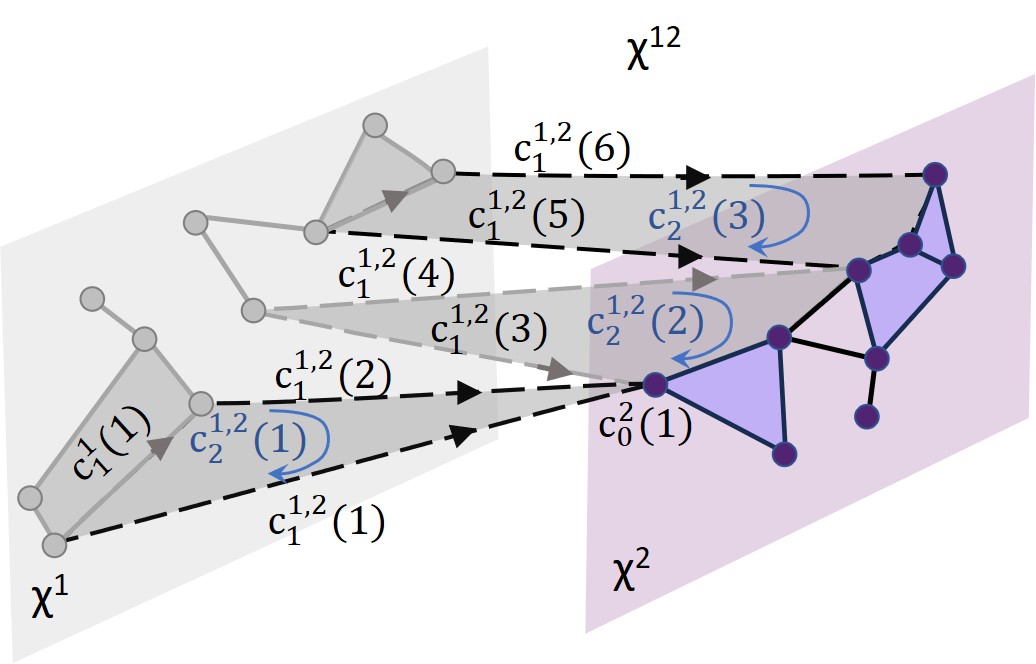}
\caption{Cross-complex $\mathcal{X}^{1,2}$ as viewed from layer $2$.}
\label{fig:CMC_local}
\end{figure}
As an example, let us consider the simple cross-cell complex $\mathcal{X}^{1,2}_{1,0}=\{ c_{2}^{1,2}(1)\}$ in Fig. \ref{fig:CMC_local}. The $2$-order cross-cell $c_{2}^{1,2}(1)$   has one face of order $1$ (edge) on layer $1$ and  one face of order $0$ (vertex) on layer $2$. The bounding cells of $c_{2}^{1,2}(1)$   are: with respect to cells on layer $1$ the two cross-edges $c_{1}^{1,2}(1)$ and $c_{1}^{1,2}(2)$, while with respect to cells on layer $2$ the bounding cell is $c_1^{1}(1)$.  
Therefore,  the   matrix $\mB_{1,0}^{(1),2} \in \mathbb{R}^{N_{0,0}^{1,2}\times N_{1,0}^{1,2}}$,  defined according to (\ref{eq:B_kn_ell}) with $N_{0,0}^{1,2}=6$ and $N_{1,0}^{1,2}=1$, is given by
$\mB_{1,0}^{(1),2}=[-1,1,0,0,0,0]^T$. 
Similarly, the matrix $\mB_{1,0}^{1,(2)} \in \mathbb{R}^{N_{1,-1}^{1,2} \times N_{1,0}^{1,2}}$ with $N_{1,-1}^{1,2}=10$ (number of edges on layer $1$) and $ N_{1,0}^{1,2}=1$, is
$\mB_{1,0}^{1,(2)}=[1,0,0,0,0,0,0,0,0,0]^T$.
It is useful to remark that these matrices can be extracted by the monocomplex incidence matrix $\mB_2$ in (\ref{eq:mono_B2}). Specifically, $\mB_{1,0}^{(1),2}$ and $\mB_{1,0}^{1,(2)}$ are the columns of $\mB_{2}^{1,2}$ and $\mB_{2}^{1,(2)}$, respectively, corresponding to the $(1,0)$ cell $c_{2}^{1,2}(1)$. \\
\textbf{Cross-Laplacian matrices.}  
Given the two layers $\ell,m$, we  are now able to introduce  a set of topological descriptors called the  $(k,n)$-cross-Laplacian matrices and describing local homologies from the perspective of a given layer. Then, we define the $(k,n)$-cross-Laplacian matrix   from layer\footnote{To maintain clarity in the notation, we define the cross-Laplacian from the layer with respect to which the boundary is calculated, while noting that the space is viewed from the perspective of the other layer.}  
$\ell$ as
\beq \label{eq:Lknl_l}
\mL_{k,n}^{(\ell),m}=\underbrace{{(\mB_{k,n}^{(\ell),m})^{T}\mB_{k,n}^{(\ell),m}}}_{\text{Lower Laplacian from layer $\ell$}} +\underbrace{\mB_{k+1,n}^{(\ell),m} (\mB_{k+1,n}^{(\ell),m})^{T}}_{\text{Upper Laplacian from layer $\ell$}}  
\eeq
where the first and second term encode the lower and upper adjacencies of cross-cells belonging to $\mathcal{X}_{k,n}^{\ell, m}$, respectively. Similarly, the  $(k,n)$-cross-Laplacian matrices from layer $m$ is defined as 
\beq \label{eq:Lknl_m}
\mL_{k,n}^{\ell,(m)}=\underbrace{(\mB_{k,n}^{\ell,(m)})^{T}\mB_{k,n}^{\ell,(m)}}_{\text{Lower Laplacian from layer $m$}}  +\underbrace{\mB_{k,n+1}^{\ell,(m)}(\mB_{k,n+1}^{\ell,(m)})^T}_{\text{Upper Laplacian from layer $m$}}.
\eeq
These Laplacian matrices are symmetric and semidefinite positive. 
It can be observed that the layer $\ell$ Hodge Laplacian of order $k$  can be derived from (\ref{eq:Lknl_l}) by setting $n=-1$, as
\beq \label{eq:Lk-1}
\mL_{k,-1}^{(\ell),m}= (\mB_{k,-1}^{(\ell),m})^{T}\mB_{k,-1}^{(\ell),m}+\mB_{k+1,-1}^{(\ell),m}(\mB_{k+1,-1}^{(\ell),m})^T. 
\eeq
Similarly, the layer $m$ Hodge Laplacian of order $n$ is obtained from (\ref{eq:Lknl_m})  by setting $k=-1$.\\
\textbf{Hodge decompositions of the space.}
 Following similar considerations as in \cite{Lim,moutuou2023} and using (\ref{eq:orth}), we  can prove that   the space $\mathbb{R}^{N_{k,n}^{\ell,m}}$  
admits different Hodge decompositions, depending on the layer with respect to which the boundaries are calculated. Then, we get
\beq \label{eq: Hodge_space_dec_ell}
\mathbb{R}^{N_{k,n}^{\ell,m} }\!\equiv \text{img}(\mB_{k,n}^{(\ell),m \,T}) \oplus \text{ker}(\mL_{k,n}^{(\ell),m}) \oplus  \text{img}(\mB_{k+1,n}^{(\ell),m}), 
\eeq
\beq \label{eq: Hodge_space_dec_m}
\mathbb{R}^{N_{k,n}^{\ell,m} }\!\equiv \text{img}(\mB_{k,n}^{\ell,(m) \, T}) \oplus \text{ker}(\mL_{k,n}^{\ell,(m)}) \oplus  \text{img}(\mB_{k,n+1}^{\ell,(m)}).       
\eeq
The orthogonality conditions in (\ref{eq:orth}) allow to define the $\ell$ and $m$ layer $(k,n)$-cross-homology groups of $\mathcal{X}$ \cite{Lim,moutuou2023} as  \beq 
\label{eq:Hkn} \text{H}_{k,n}^{(\ell)}\cong \text{ker} (\mL_{k,n}^{(\ell),m}), \;  \; \text{H}_{k,n}^{(m)}\cong \text{ker} (\mL_{k,n}^{\ell,(m)}). \eeq 
The cross-homology groups are characterized by their dimensions, named the $\ell$ and $m$ $(k,n)$-cross-Betti numbers $\beta_{k,n}^{(\ell)}= \text{dim} (\text{H}_{k,n}^{(\ell)})$ and $\beta_{k,n}^{(m)}= \text{dim} (\text{H}_{k,n}^{(m)})$. Then, we can define the $(k,n)$-cross-Betti vector of $\mathcal{X}_{k,n}^{\ell,m}$ as 
\beq \label{eq:Betti_vec}
\boldsymbol{\beta}_{k,n}^{\ell,m}=[\beta_{k,n}^{(\ell)},\beta_{k,n}^{(m)}].
\eeq
As we will discuss below, these numbers  are able to capture the homologies of the intra-layer and cross-layer cell complexes, by identifying different invariants of the spaces  depending on the indexes $(k,n)$.

\section{Multilayer graphs  through the lens of cross-Laplacians}
\label{sec: Multilayer_graphs}
In this section we explore how multilayer graphs, i.e. CMCs of order $1$, can be represented using  cross-Laplacians. 
We construct the node-edge boundary operators acting on  subsets  of vertices as well as  on intra- and cross-edges. \\
The $\ell$ (intra-)layer $(0,-1)$-cross-Laplacian $\mL_{0,-1}^{(\ell),m}$ represents the common $\ell$ layer graph Laplacian of order $0$. It can be obtained from (\ref{eq:Lk-1}),  setting  
$k=0$ and using  (\ref{eq:B_k_null2}), as
\beq
\mL_{0,-1}^{(\ell),m}=\mB_{1,-1}^{(\ell),m} (\mB_{1,-1}^{(\ell),m})^{T}
\eeq
where  the node-edge incidence matrix $\mB_{1,-1}^{(\ell),m}$ is derived from (\ref{eq:B_kn_ell}) with $q=1$. 
 The cross-Laplacian matrix $\mL_{0,-1}^{\ell,(m)}$ from layer $m$  is derived    using  (\ref{eq:Lknl_m}) and (\ref{eq:B_k_null}) as
  \beq
\mL_{0,-1}^{\ell,(m)}=\mB_{0,0}^{\ell,(m)} (\mB_{0,0}^{\ell,(m)})^T
\eeq
where $\mB_{0,0}^{\ell,(m)}$ is the boundary matrix 
obtained from (\ref{eq:B_kn_m}) setting $q=0$ 
and with $c_{0}^{\ell}(j) \in \mathcal{X}^{\ell,m}_{0,-1}$ and $c_{1}^{\ell,m}(j) \in \mathcal{X}^{\ell,m}_{0,0}$.
  It can be easily shown that $\mL_{0,-1}^{\ell,(m)}$ is a diagonal matrix with diagonal entries the upper degrees of nodes on layer $\ell$ with respect to cross-edges. Thus, zeros on the diagonal  identify the nodes on layer $\ell$ that are not $\ell$ faces of cross-edges.
  \\
Similarly, from the perspective of layer  $m$, we get the common $m$ layer graph Laplacian of order $0$ as
$
\mL_{-1,0}^{\ell,(m)}=\mB_{-1,1}^{\ell,(m)} (\mB_{-1,1}^{\ell,(m)})^{T}
$
where $\mB_{-1,1}^{\ell,(m)}$ is the node-edge incidence matrix of the graph on layer $m$.
The cross-Laplacian matrix  from layer $\ell$ is the diagonal matrix
$
\mL_{-1,0}^{(\ell),m}=\mB_{0,0}^{(\ell),m} (\mB_{0,0}^{(\ell),m})^{T}
$
with diagonal entries representing the upper cross-degree of the nodes on layer $m$, i.e. the number of cross-edges to which each node is connected.\\
\textbf{Cross-Betti numbers to identify hub  nodes.}
Let us consider the  cross-Betti vector for a multi-layer graph where pairs of layers $(\ell,m)$ are interconnected. The Betti vector $\boldsymbol{\beta}_{0,-1}^{\ell,m}$ is defined from (\ref{eq:Betti_vec}) as
\beq
\boldsymbol{\beta}_{0,-1}^{\ell,m}=[\beta_{0,-1}^{(\ell)},\beta_{0,-1}^{(m)}]
\eeq
where: i) $\beta_{0,-1}^{(\ell)}=\text{dim}(\text{ker}(\mL_{0,-1}^{(\ell),m}))$ is the number of connected components in the graph $\mathcal{G}^{\ell}$, ii) $\beta_{0,-1}^{(m)}=\text{dim}(\text{ker}(\mL_{0,-1}^{\ell,(m)}))$ is the number of nodes in $\mathcal{G}^{\ell}$ that are not connected with any nodes in $\mathcal{G}^{m}$.
Similarly considerations hold for the cross-Betti vector $\boldsymbol{\beta}_{-1,0}^{\ell,m}$.\\
Interestingly, since  $\mL_{0,-1}^{\ell,(m)}$ and $\mL_{-1,0}^{(\ell),m}$
are diagonal matrices their eigenvectors identify the hub nodes
over each layer that play a key role for the interconnections between layers.
\textbf{Remark.} Note that   the above boundary matrices can be extracted from the monocomplex  node-edge incidence matrix in (\ref{eq:mono_B1}). Specifically, for a two-layer graph, $\mB_{1,-1}^{(1),2}$, $\mB_{0,0}^{1,(2)}$, $\mB_{-1,1}^{1,(2)}$, $\mB_{0,0}^{(1),2}$ correspond to the sub-matrices $\mB_{1}^{(1)}$, $\mB_{1}^{1,(2)}$, $\mB_{1}^{(2)}$ and $\mB_{1}^{(1),2}$, respectively.

\section{Cross-Laplacian representation of CMCs in the edge space}
\label{sec: CMC_in_edge_space}
One of the most appealing features of the framework we here propose is its  capability to represent different levels of connectivity using a plethora of algebraic descriptors of the observed data. Considering a second order CMC, we can build various cross-Laplacians  depending on the invariant of the space we aim to capture.\\
In this work, we focus on the $(0,0)$-cross-Laplacians, i.e. the Laplacian matrices defined over the cross-edge space. We show how these matrices encode  local invariant of the topological space through the cross-Betti vector.\\
\textbf{Cross-Laplacians in the cross-edge space.}
Let us consider the $(0,0)$ cross-Laplacians $\mL_{0 ,0}^{(\ell), m}$ and $\mL_{0,0}^{\ell, (m)}$. These Laplacians  are $N_{0,0}^{\ell,m}\times N_{0,0}^{\ell,m}$-symmetric matrices indexed on the cross-edges $c_1^{\ell,m} \in \mathcal{X}_{0,0}^{\ell,m}$.
\\In particular,   from (\ref{eq:Lknl_l}), we get the Laplacian $\mL_{0 ,0}^{(\ell), m}$ as
\beq \label{ew_L_00_ell}
\mL_{0,0}^{(\ell),m}= (\mB_{0,0}^{(\ell),m})^{T} \mB_{0,0}^{(\ell),m}+\mB_{1,0}^{(\ell),m}(\mB_{1,0}^{(\ell),m})^{T}\eeq
where  the matrix  $\mB_{0,0}^{(\ell),m}$ of dimension $N_{-1,0}^{\ell,m} \times N_{0,0}^{\ell,m}$ is derived from  (\ref{eq:B_kn_ell}), as
  \beq \label{B_00l}
  B_{0,0}^{(\ell),m}(i,j)=\left\{\begin{array}{rll}
  0, &  \text{if} \; c_{0}^{m}(i) \not\prec_b c_{1}^{l,m}(j) \\
  1,& \text{if} \; c_{0}^{m}(i) \prec_b c_{1}^{l,m}(j),\;   \; c_{0}^{m}(i) \sim  c_{1}^{l,m}(j)\\
  -1,& \text{if} \; c_{0}^{m}(i) \prec_b c_{1}^{l,m}(j), \;   \; c_{0}^{m}(i) \nsim  c_{1}^{l,m}(j).\\
  \end{array}\right. 
  \eeq
Then, the lower Laplacian $\mathbf{{L}}_{d \, 0,0}^{(\ell),m}:= (\mB_{0,0}^{(\ell),m})^T\mB_{0,0}^{(\ell),m}$  captures the cross-edge lower incidences on layer $m$, i.e. the entry $i,j$ is equal to  $\pm 1$ if $c_{1}^{l,m}(i)$ is lower adjacent  to $c_{1}^{l,m}(j)$ on layer $m$, and $0$ otherwise.\\
Using (\ref{eq:B_kn_ell}),  we can derive the boundary matrix  
$\mB_{1,0}^{(\ell),m}: C_{1,0} \rightarrow C_{0,0}$ of dimension $N^{\ell,m}_{0,0} \times N^{\ell,m}_{1,0}$, with $N^{\ell,m}_{1,0} \in \mathcal{X}^{\ell,m}_{1,0}$ the number of $2$-order  $(1,0)$ cross-cells between layers $\ell,m$, as: 
\beq \label{B_00l}
  B_{1,0}^{(\ell),m}(i,j)\!=\!\!\left\{\begin{array}{rll}
  0, & \text{if} \; c_{1}^{\ell,m}(i) \not\prec_b c_{2}^{\ell,m}(j) \\
  1,& \text{if} \; c_{1}^{\ell,m}(i) \prec_b c_{2}^{\ell,m}(j),\,    c_{1}^{\ell,m}(i) \sim  c_{2}^{\ell,m}(j)\\
  -1,& \text{if} \; c_{1}^{\ell,m}(i) \prec_b c_{2}^{\ell,m}(j), \,   c_{1}^{\ell,m}(i) \nsim  c_{2}^{\ell,m}(j)\\
  \end{array}\right. 
  \eeq
  for $c_{1}^{\ell,m}(i) \in   \mathcal{X}_{0,0}^{\ell,m}$ and $c_{2}^{\ell,m}(j) \in \mathcal{X}_{1,0}^{\ell,m}$. The second term in (\ref{ew_L_00_ell})  is the upper Laplacian matrix $\mathbf{{L}}_{u \, 0,0}^{(\ell),m}:=\mB_{1,0}^{(\ell),m}(\mB_{1,0}^{(\ell),m})^{T}$ describing the upper adjacencies  of the cross-edges $c_1^{\ell, m}$ that are boundaries of $2$-order cross-cells with edges on layer $\ell$ and one vertex on layer $m$.\\ 
Similar considerations hold for the cross-Laplacian matrix  from layer $m$
\beq \label{ew_L_00_m}
\mL_{0,0}^{\ell,(m)}=(\mB_{0,0}^{\ell,(m)})^{T}\mB_{0,0}^{\ell,(m)}+\mB_{0,1}^{\ell,(m)} (\mB_{0,1}^{\ell,(m)})^{T}.
\eeq
It is worth noting that, considering a two-layer CMC, the matrices $\mB_{0,0}^{(1),2}$ and $\mB_{0,0}^{1,(2)}$
correspond to the sub-matrices $\mB_1^{(1),2}$ and  $\mB_1^{1,(2)}$  of the monocomplex matrix $\mB_1$ in (\ref{eq:mono_B1}), respectively; the matrices $\mB_{1,0}^{(1),2}$ and $\mB_{0,1}^{1,(2)}$ correspond instead to the columns of the sub-matrix $\mB_2^{1,2}$ in  (\ref{eq:mono_B2}) associated with $(1,0)$ and $(0,1)$ cross-cells, respectively.\\
\textbf{The Cross-Betti vector $\boldsymbol{\beta}_{0,0}$.}
To define the invariants of the topological spaces represented by the $(0,0)$-cross-Laplacians, we first introduce the notion of cones as defined in \cite{moutuou2023}. 
\begin{definition}
A cone is the shortest path of length two between two nodes within a  layer, passing  through a node on the other layer and not belonging to the cross-boundary of $2$-order cells.
A cone can be:
\begin{itemize}
    \item[i)] closed if it forms a cycle with intra-layer edges;
    \item[ii)] open if it has a vertex on a layer connecting two unconnected clusters on the other layer.
\end{itemize}
\end{definition}
Then, we can state the following theorem.
\begin{theorem}
    Given a CMC $\mathcal{X}$, the $(0,0)$ cross-Betti vector $\boldsymbol{\beta}_{0,0}^{\ell,m}=(\beta_{0,0}^{(\ell)},\beta_{0,0}^{(m)})$ counts the cones open and closed between layers $\ell$ and $m$.
    Specifically, $\beta_{0,0}^{(\ell)}$ counts the cones with vertices on layer $m$, while $\beta_{0,0}^{(m)}$ counts those with  vertices on layer $\ell$. These vertices are called harmonic cross-hubs.
    \end{theorem}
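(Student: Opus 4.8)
The plan is to reduce the combinatorial claim to a rank computation via the Hodge decomposition already available in (\ref{eq: Hodge_space_dec_ell}), and then read off each rank term as a count of cones. Taking dimensions on both sides of $\mathbb{R}^{N_{0,0}^{\ell,m}} \equiv \text{img}((\mB_{0,0}^{(\ell),m})^{T}) \oplus \text{ker}(\mL_{0,0}^{(\ell),m}) \oplus \text{img}(\mB_{1,0}^{(\ell),m})$, and using that the two image summands have dimensions $\text{rank}(\mB_{0,0}^{(\ell),m})$ and $\text{rank}(\mB_{1,0}^{(\ell),m})$, I obtain $\beta_{0,0}^{(\ell)} = N_{0,0}^{\ell,m} - \text{rank}(\mB_{0,0}^{(\ell),m}) - \text{rank}(\mB_{1,0}^{(\ell),m})$. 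The orthogonality $\mB_{0,0}^{(\ell),m}\mB_{1,0}^{(\ell),m}=\mathbf{0}$ of Proposition~1 is exactly what makes this decomposition orthogonal, so the whole proof reduces to giving combinatorial meaning to the two rank terms.

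For the first term I would exploit the special sparsity of $\mB_{0,0}^{(\ell),m}$: since every cross-edge has exactly one face on layer $m$, each column of $\mB_{0,0}^{(\ell),m}$ carries a single nonzero entry. Hence its nonzero rows have pairwise disjoint supports and are linearly independent, so $\text{rank}(\mB_{0,0}^{(\ell),m})$ equals the number of layer-$m$ nodes incident to at least one cross-edge. Writing $d_v$ for the number of cross-edges meeting a layer-$m$ node $v$ and noting $\sum_v d_v = N_{0,0}^{\ell,m}$, the nullity becomes $N_{0,0}^{\ell,m} - \text{rank}(\mB_{0,0}^{(\ell),m}) = \sum_{v}(d_v-1)_{+}$. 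I would then exhibit a basis of $\text{ker}(\mB_{0,0}^{(\ell),m})$ by choosing, at each hub $v$ with $d_v\geq 2$, the $d_v-1$ elementary flows placing $+1$/$-1$ on a pair of cross-edges incident to $v$ and zero elsewhere; each is conservative at $v$ and represents precisely a length-two path $a$--$v$--$b$ between two layer-$\ell$ nodes through the apex $v$, i.e. a cone with vertex on layer $m$. Thus $\dim\text{ker}(\mB_{0,0}^{(\ell),m})$ counts all independent cones with apex on layer $m$.

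It remains to subtract $\text{rank}(\mB_{1,0}^{(\ell),m})$: each independent column of $\mB_{1,0}^{(\ell),m}$ is the cross-boundary of a $(1,0)$ $2$-cell, i.e. a cone that is filled in, and since $\text{img}(\mB_{1,0}^{(\ell),m}) \subseteq \text{ker}(\mB_{0,0}^{(\ell),m})$, removing it deletes exactly the cones lying in the cross-boundary of $2$-order cells, leaving only the unfilled cones of the cone definition. To conclude that $\beta_{0,0}^{(\ell)}$ equals the number of open plus closed cones, I would classify the surviving harmonic representatives hub by hub: a cone whose endpoints $a,b$ lie in the same connected component of $\mathcal{G}^{\ell}$ closes up with an intra-layer path into a cycle (closed cone), whereas one whose endpoints lie in distinct components bridges two clusters (open cone), so partitioning the chosen basis by this dichotomy gives the harmonic dimension as the sum of the two counts. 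Interchanging the roles of $\ell$ and $m$ and applying the identical argument to $\mL_{0,0}^{\ell,(m)}$ and $\mB_{0,0}^{\ell,(m)}$ yields the statement for $\beta_{0,0}^{(m)}$, with the harmonic cross-hubs now on layer $\ell$. The hard part will be this last step: making the open/closed classification rigorous and basis-independent, i.e. arranging a harmonic basis in one-to-one correspondence with genuinely distinct geometric cones so that filled, closed, and open cones are counted with neither overlap nor omission, which requires care when a single hub supports several simultaneous cones and when intra-layer connectivity is itself affected by the other cones present.
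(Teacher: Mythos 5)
Your proposal follows essentially the same route as the paper's own proof: both take dimensions in the Hodge decomposition (\ref{eq: Hodge_space_dec_ell}) to get $\beta_{0,0}^{(\ell)}=N_{0,0}^{\ell,m}-\text{rank}(\mB_{0,0}^{(\ell),m})-\text{rank}(\mB_{1,0}^{(\ell),m})$, identify the first rank with the number $n_0^m$ of layer-$m$ nodes touched by cross-edges (you via the one-nonzero-per-column sparsity, the paper via the diagonal Gram matrix $\mB_{0,0}^{(\ell),m}(\mB_{0,0}^{(\ell),m})^{T}$ --- the same fact), and identify the second rank with the number of filled $(1,0)$ cross-cells, so that the leftover kernel dimension counts the unfilled wedges, i.e.\ the cones. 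The residual care you flag at the end (independence of the per-hub difference flows together with the filled-cell boundaries) is exactly the content of the paper's wedge-counting identity $n_{0,0}^{m}(k)-1=n_{c}^{m}(k)+N_{1,0}^{\ell,m}(k)$, so your argument is correct at the same level of rigor as the published one.
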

    \begin{proof}
    See Appendix B.
    \end{proof}

The harmonic cross-hubs are nodes on a layer  that control clusters on the other layer. The cross-hubs associated with open cones are key nodes for the connectivity among clusters, since, if removed, they might eliminate any communication between clusters of nodes.

Considering the cell multicomplex in Fig. \ref{fig:CMC_2}, 
we observe that $\boldsymbol{\beta}_{0,0}=(2,0)$. Specifically,   $\beta_{0,0}^{(1)}=2$, as there are two distinct cones:  one closed-cone forming the cross-cycle with vertices $c_0^1(5),c_0^1(6),c_0^1(8),c_0^2(13)$ and one open cone composed of the vertices  $c_0^{1}(4), c_0^{2}(10)$ and $c_0^{1}(5)$. The  vertex $c_0^{2}(10)$ connecting  the two clusters of nodes on layer $1$ and the vertex $c_0^{2}(13)$ are harmonic cross-hubs. 
The Betti number $\beta_{0,0}^{(2)}$ is  equal to $0$, since there are no cones with a vertex on layer $1$ and edges on layer $2$.
 \begin{figure}[t]
\centering
\includegraphics[width=8.5cm,height=3.7cm]{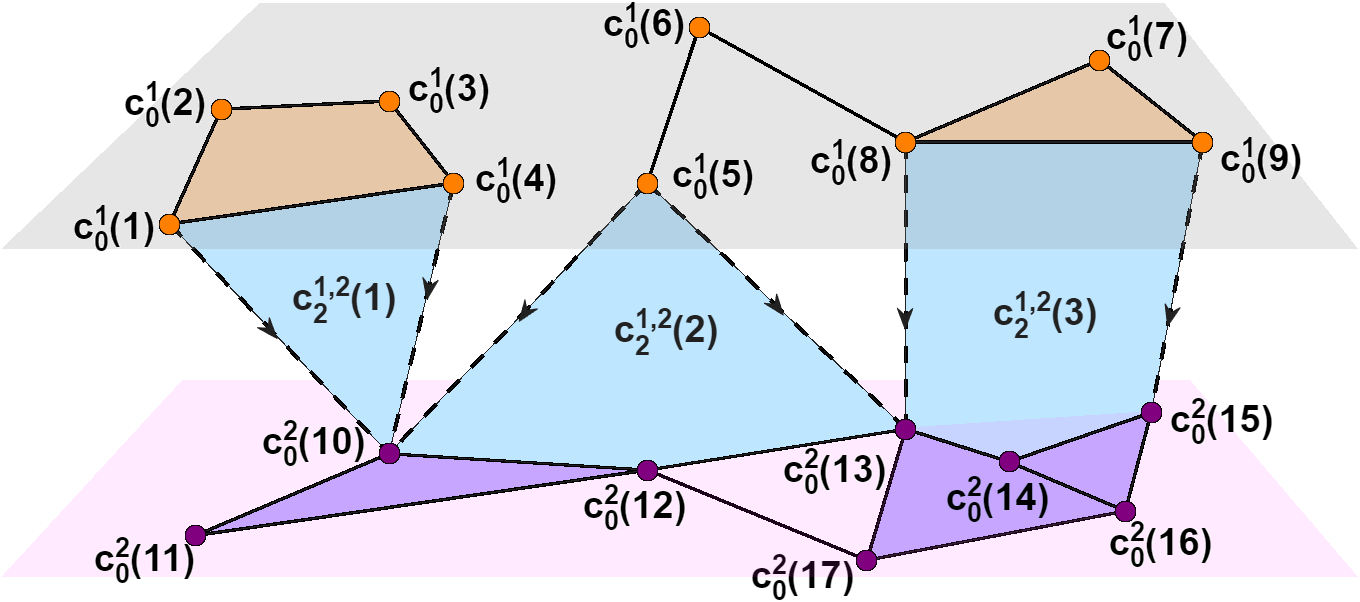}
\caption{Pictorial representation of a $2$-order CMC. }
\label{fig:CMC_2}
\end{figure}

\section{Signal Processing over Cell MultiComplexes}
\label{sec: signal_proc}
As  shown  the cross-Laplacians  are algebraic operators able to capture the topological structure of the CMCs. Then, they are suitable algebraic tools   for  representing signals defined over CMCs.
Let us consider a $2$-order CMC $\mathcal{X}$. When viewed as  a monocomplex structure, we have $\mathcal{X}=(\mathcal{V},\mathcal{E},\mathcal{C})$, where $\mid \mathcal{V}\mid=N$,  $\mid \mathcal{E}\mid=E$ and  $\mid \mathcal{C}\mid=C$  are the dimensions of the node, edges and $2$-cells sets, respectively. We can define signals over the set of nodes, edges and $2$-cells as the maps
\beq 
\bs_0: \mathcal{V}\rightarrow \mathbb{R}^{N},  \, \bs_1: \mathcal{E}\rightarrow \mathbb{R}^{E}, \bs_2: \mathcal{C}\rightarrow \mathbb{R}^{C}. \eeq 
Considering the multilayer structure of the CMC, we assume that it is composed of $L$ interconnected layers and, for simplicity of notation, that only pairs of consecutive layers are inter-connected. Then,  the  node signal $\bs_0$ can be written as 
\beq 
\bs_0=[\bs_0^{1};\bs_0^{2};\ldots;\bs_0^{L}]
\eeq
where $\bs_0^{i}$ is the signal defined on the nodes of the $i$-th layer. 
Similarly, the edge signal is represented as 
\beq \label{eq:overall_edge_signal}
\bs_1=[\bs_1^{1};\bs_1^{1,2};\bs_1^{2};\bs_1^{2,3};\bs_1^{3};\ldots; \bs_1^{L}]
\eeq
where $\bs_1^{i}$ and $\bs_1^{\ell,m}$ are the signals defined on the intra-layer edges and on the cross-edges, respectively.
Finally, we define the signal observed on the $2$-order cells as
\beq
\bs_2=[\bs_{2}^{1};\bs_2^{1,2};\bs_2^{2};\bs_2^{2,3};\bs_2^{3};\ldots; \bs_2^{L}]
\eeq
where $\bs_{2}^{i}$ and $\bs_2^{\ell,m}$ are the signals defined on the intra-layer $2$-order cells and on the cross-cells $c_{2}^{\ell,m}$, respectively.\\
\textbf{Multiple Hodge-based signal decompositions}.
The key novelty and strength of the proposed approach lies in the use of the cross-Laplacians, which enable multiple representations of the same signal based on the local features we aim to learn.
Using  the Hodge decompositions in (\ref{eq: Hodge_space_dec_ell}), (\ref{eq: Hodge_space_dec_m}), 
we can use multiple decompositions of the signal, each partitioning it into different orthogonal components. \\
In this first study, we 
 focus on the  $(0,0)$-cross Laplacians in (\ref{ew_L_00_ell})  and  (\ref{ew_L_00_m}).  It follows directly from (\ref{eq: Hodge_space_dec_ell}),  that the cross-edges signal $\bs_{1}^{\ell,m}$, belonging to the space $\mathbb{R}^{N_{0,0}^{\ell,m}}$, can be decomposed  as
\beq \label{eq:Hodge_L00_ell}
\bs_1^{\ell,m} = \mB_{0,0}^{(\ell),m \, T} \bs_0^{m}+ \mB_{1,0}^{(\ell),m} \bs_2^{\ell,m}+\bs_{1,H}^{\ell,m},
\eeq
where the node signal $\bs_0^{m}\in \mathbb{R}^{N_{-1,1}^{\ell,m}}$ is observed over the nodes within layer $m$ and $\bs_2^{\ell, m}\in \mathbb{R}^{N_{1,0}^{\ell,m}}$ is a $2$-order signal observed over the filled  $(1,0)$ cross-cells $c_{2}^{\ell,m}$ between layers $\ell,m$.  Finally, the harmonic edge signal  $\bs_{1,H}^{\ell,m}$ belongs to the subspace spanned by  $\text{ker}(\mL_{0,0}^{(\ell),m})$, whose dimension is the number of  $(1,0)$ cones between the two layers.  Equivalently, 
 from (\ref{eq: Hodge_space_dec_m}) we get the Hodge decomposition
\beq \label{eq:Hodge_L00_m}
\bs_1^{\ell,m} = \mB_{0,0}^{\ell,(m) \, T} \bs_0^{\ell}+ \mB_{0,1}^{\ell,(m)} \bs_2^{\ell,m}+\bs_{1,H}^{\ell,m},
\eeq
where the node signal $\bs_0^{\ell}\in \mathbb{R}^{N_{1,-1}^{\ell,m}}$ is observed over the nodes within layer $\ell$ and $\bs_2^{\ell, m}\in \mathbb{R}^{N_{0,1}^{\ell,m}}$ is a $2$-order signal observed over the filled  $(0,1)$ cross-cells $c_{2}^{\ell,m}$ between layers $\ell,m$.\\
The physical-based interpretation of the flow divergence and circulation (curl)  offered by the Hodge representation of the   monolayer cell complex structure, can be also generalized to the Hodge signal decompositions in (\ref{eq:Hodge_L00_ell}) and (\ref{eq:Hodge_L00_m}). In this interpretation, it is important to consider that we are adopting a local topological perspective, specifically, one in which a layer is viewed from the standpoint of another connected layer. This implies  the introduction of cross-divergence and   cross-curl operators.
Therefore, for the flows between layers $\ell$ and $m$, using  for example (\ref{eq:Hodge_L00_ell}), we can define 
the cross-divergence \beq \text{div}_{cr}(\bs_{1}^{\ell,m})=\mB_{0,0}^{(\ell),m} \bs_1^{\ell,m} \eeq that is a node signal measuring the conservation of the cross-flows  over the nodes of layer $m$. The cross-curl term \beq \text{curl}_{cr}(\bs_{1}^{\ell,m})= \mB_{1,0}^{(\ell),m\, T} \bs_1^{\ell,m} \eeq is instead a measure of the flow conservation along cross-edges bounding $(1,0)$ cross-cells. 
Therefore, we define the first term   in (\ref{eq:Hodge_L00_ell}), $\bs_{{cr}\text{-}{irr}}^{(\ell),m}:=\mB_{0,0}^{(\ell),m \, T} \bs_0^{m}$, as the cross-irrotational flow.
It has  zero cross-curl along the edges of   $(1,0)$  cross-cells. 
 The second flow in (\ref{eq:Hodge_L00_ell}),   defined as $\bs_{{cr}\text{-}{sol}}^{(\ell),m}:=\mB_{1,0}^{(\ell),m} \bs_2^{\ell,m}$, has zero-sum on the vertices over layer $m$. The flow $\bs_{{cr}\text{-}{sol}}^{(\ell),m}$ is the cross-solenoidal flow.
\vspace{-0.5cm}
\subsection{Spectral cross-signal representation}
\label{sec: spectral_theory}
By viewing the cell complex structure as a monolayer structure, we can leverage the 
spectral theory developed in \cite{sardellitti2024topological}
to represent topological signals using as bases  the eigenvectors of the Hodge Laplacian matrix. Considering a cell complex of order two, the first-order Hodge Laplacian $\mL_1$ admits the eigen-decomposition $\mL_1=\mU_1 \boldsymbol{\Lambda}_1 \mU_1^T$ where the columns of  $\mU_1$ are the eigenvectors  and the diagonal entries of $\boldsymbol{\Lambda}_1$ are the associated eigenvalues.
Therefore, the edge signal $\bs_1$ in (\ref{eq:overall_edge_signal}) can be represented using  the unitary bases $\mU_1 \in \mathbb{R}^{E\times E}$ as
\beq
\bs_1=\mU_1 \hat{\bs}_1
\eeq
where $\hat{\bs}_1$ 
are the Cell Fourier Transform (CFT) coefficients with inverse CFT defined as $ \hat{\bs}_1=\mU_1^T\bs_1$. A cross-signal is called bandlimited with bandwidth $W$ if it admits a sparse representation, i.e. it can be represented using $W$ eigenvectors.

Suppose now that our goal is to process the flow $\bs_{1}^{\ell,m}$ between two layers, since we are interested in capturing the information exchanged between two different systems. The edge signal $\ms_{1}^{\ell,m}$  consists of a subset of components of the  edge signal $\bs_1$ defined in
(\ref{eq:overall_edge_signal}) and, thus lies  in a subspace of $\mathbb{R}^E$. This implies that representing $\bs_{1}^{\ell,m}$ using the orthogonal bases $\mU_1$ spanning the overall space $\mathbb{R}^E$, results in an overcomplete representation of the signal. To overcome this issue, we can use the cross-Laplacian matrix whose eigenvectors induce the Hodge partition of the cross-signal space  in three orthogonal components. \\
To extend the cell complex spectral theory \cite{sardellitti2024topological} to CMCs, we focus w.l.o.g. on the cross-Laplacian $\mL_{0,0}^{(\ell),m}$. Given the eigendecomposition $\mL_{0,0}^{(\ell),m}=\mU_{0,0}^{(\ell),m} \boldsymbol{\Lambda}_{0,0}^{(\ell),m}\mU_{0,0}^{(\ell),m\, T}$, we can represent cross-edge signals on the subspace spanned by the eigenvectors of the cross-Laplacian. Hence, we define the CMC Fourier Transform as the projection of a cross-edge signal $\bs_{1}^{\ell,m}$ onto the space spanned by the eigenvectors of $\mL_{0,0}^{(\ell),m}$, i.e.   $\hat{\bs}_{1}^{\ell,m}:= \mU_{0,0}^{(\ell),m\, T}\bs_{1}^{\ell,m}$. Then, the cross-edge signal can be represented as a vector belonging to  $\mathbb{R}^{N_{0,0}^{\ell,m}}$  with $\mathbb{R}^{N_{0,0}^{\ell,m}} \subset \mathbb{R}^{E}$ as
\beq
\bs_{1}^{\ell,m}:= \mU_{0,0}^{(\ell),m}\hat{\bs}_{1}^{\ell,m}.
\eeq
It is important to remark that according to the goal of our learning task,  different signal dictionaries can be used to extract different signal features. 
\begin{figure*}[t]
    \centering
    \begin{subfigure}[b]{0.28\textwidth}
        \includegraphics[width=1.0\textwidth, height=3.2cm]{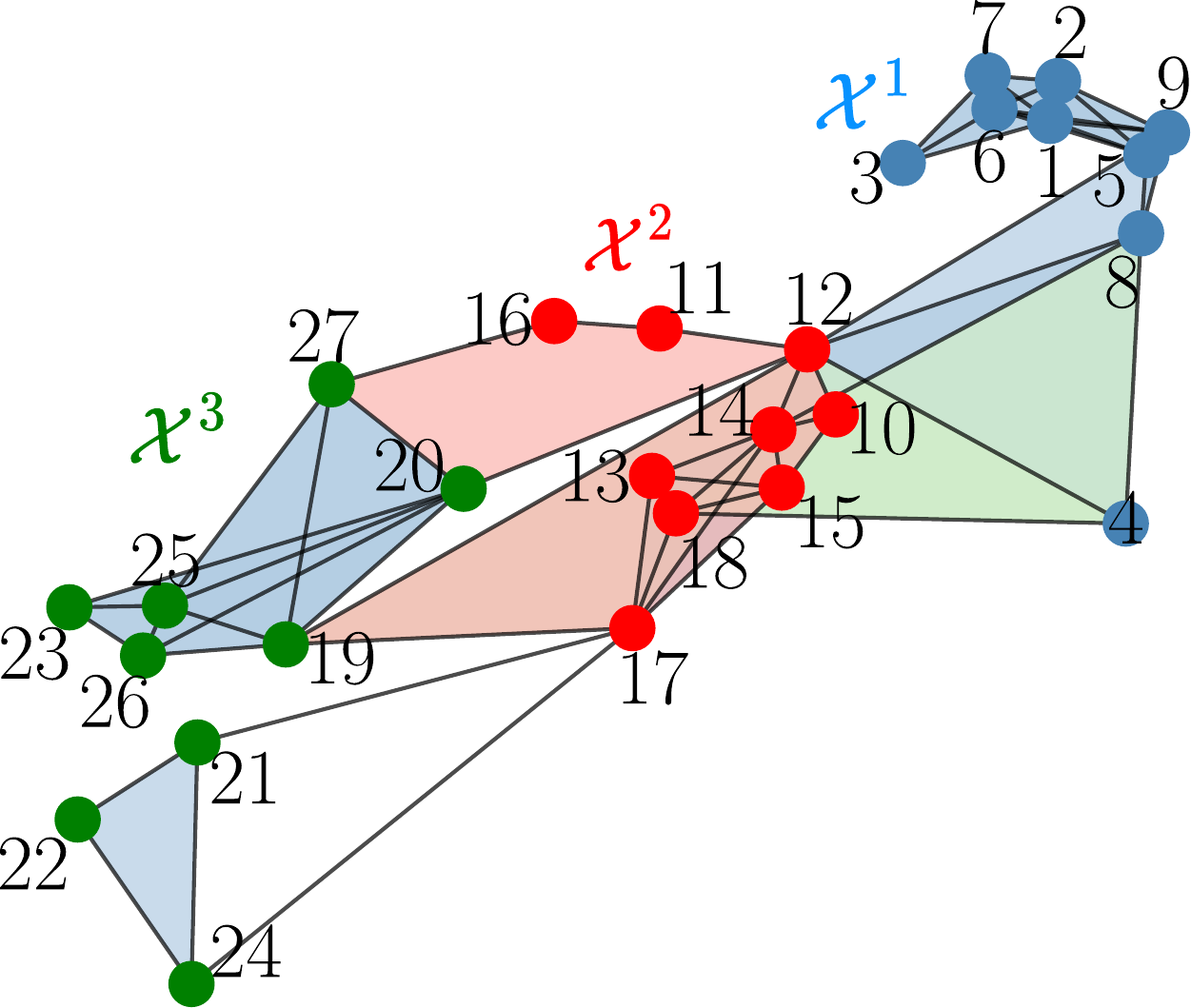}
        \caption{CMC with two holes}
    \end{subfigure}
    \hfill
    \begin{subfigure}[b]{0.28\textwidth}
        \includegraphics[width=1.0\textwidth, height=3.2cm]{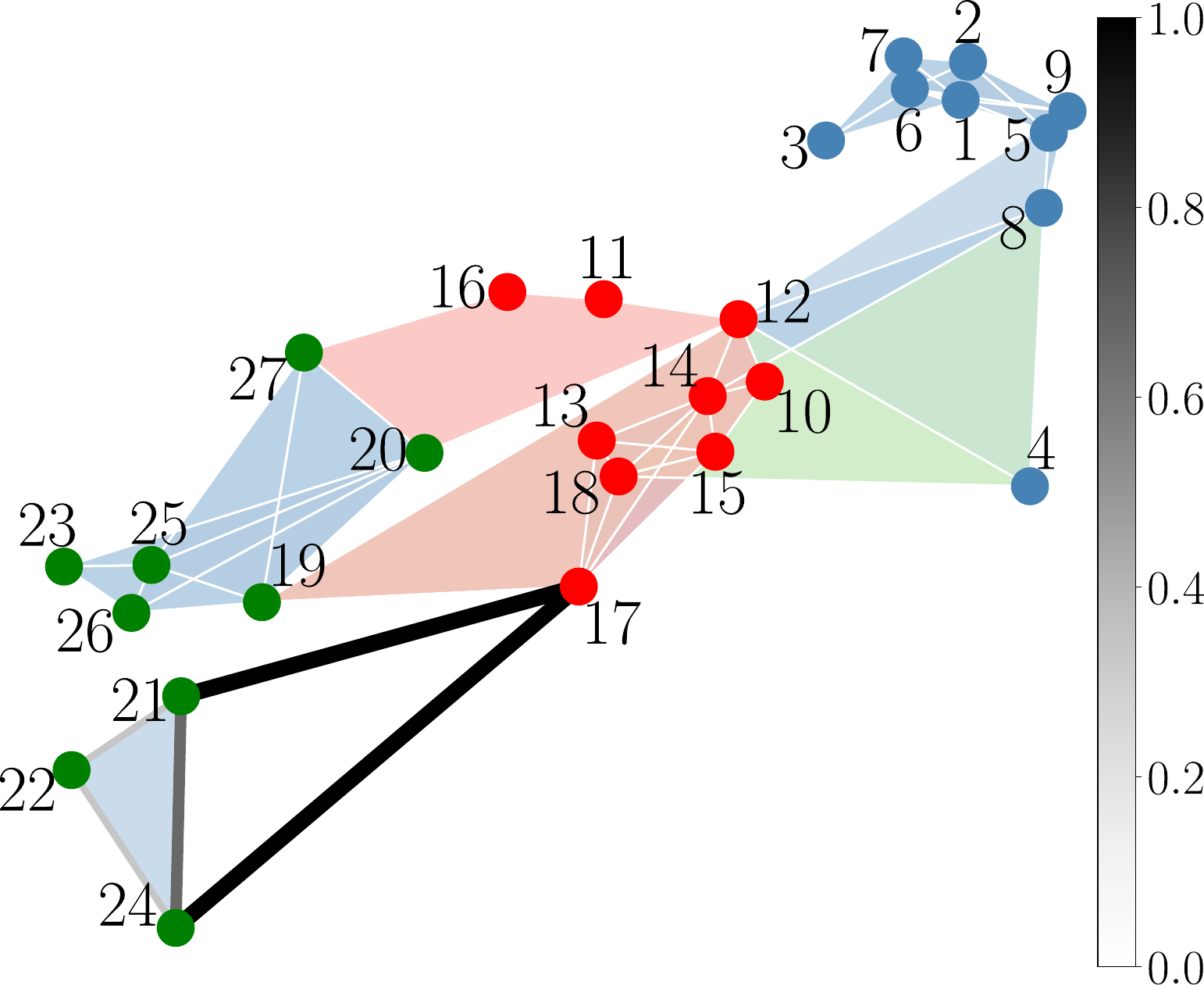}
        \caption{First harmonic eigenvector of $\mL_1$}
    \end{subfigure}
    \hfill
    \begin{subfigure}[b]{0.28\textwidth}
        \includegraphics[width=1.0\textwidth, height=3.2cm]{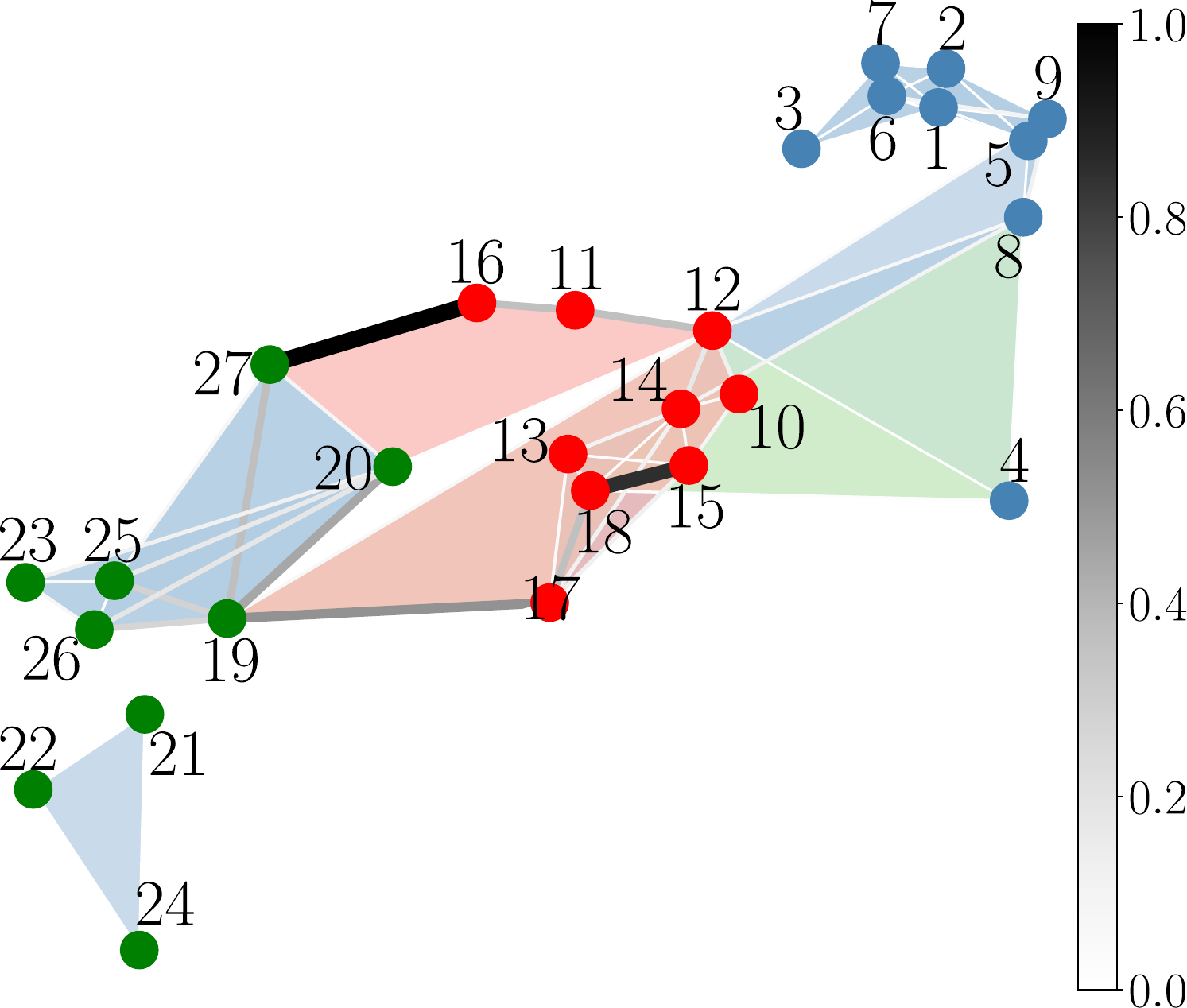}
          \caption{Second harmonic eigenvector of $\mL_1$}
    \end{subfigure}

    \vspace{0.2cm}

    \begin{subfigure}[b]{0.28\textwidth}
        \includegraphics[width=1.0\textwidth, height=3.0cm]{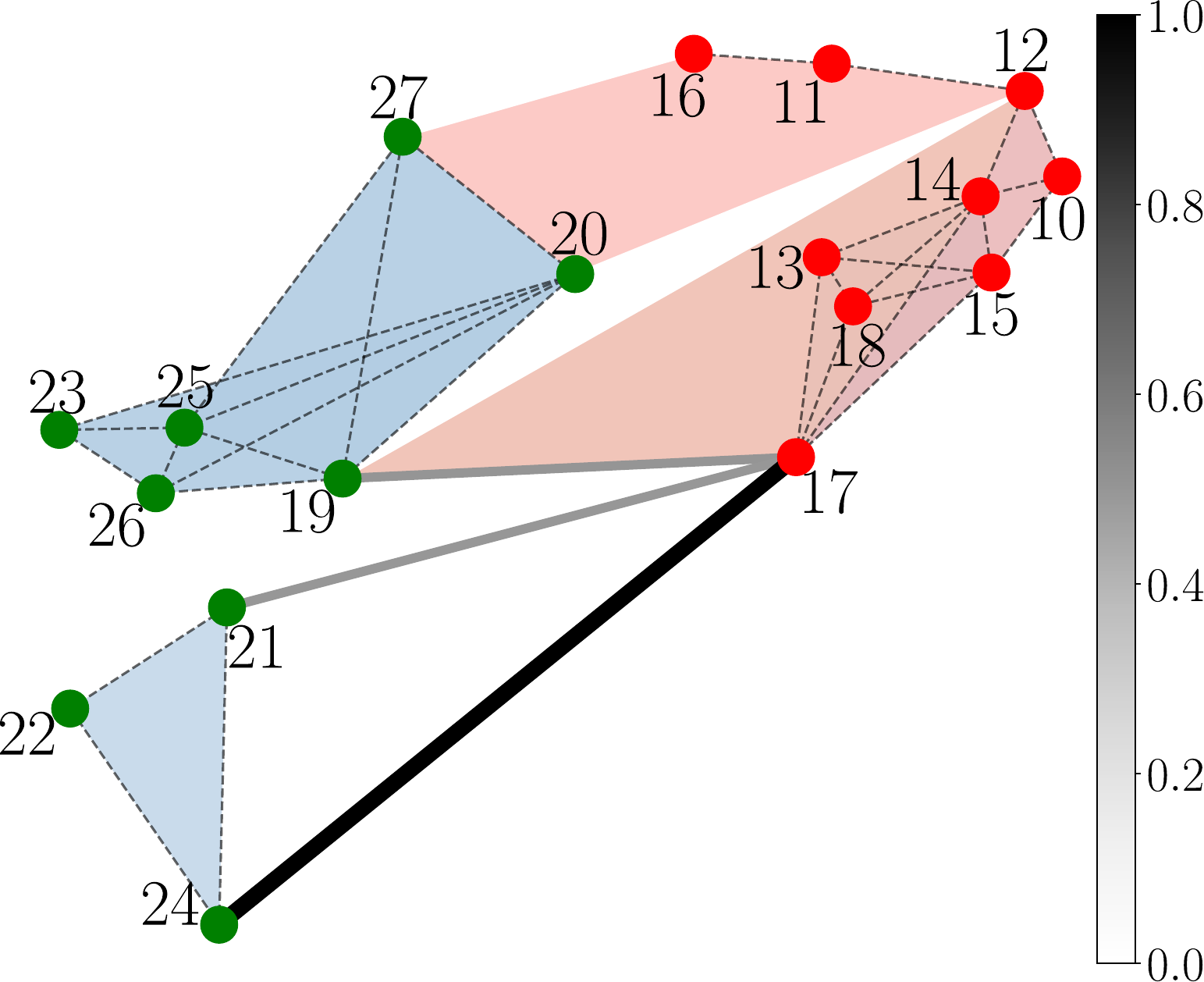}
          \caption{First harmonic eigenvector of $\mL_{0,0}^{2,(3)}$}
    \end{subfigure}
    \hfill
    \begin{subfigure}[b]{0.3\textwidth}
        \includegraphics[width=1.0\textwidth, height=3.0cm]{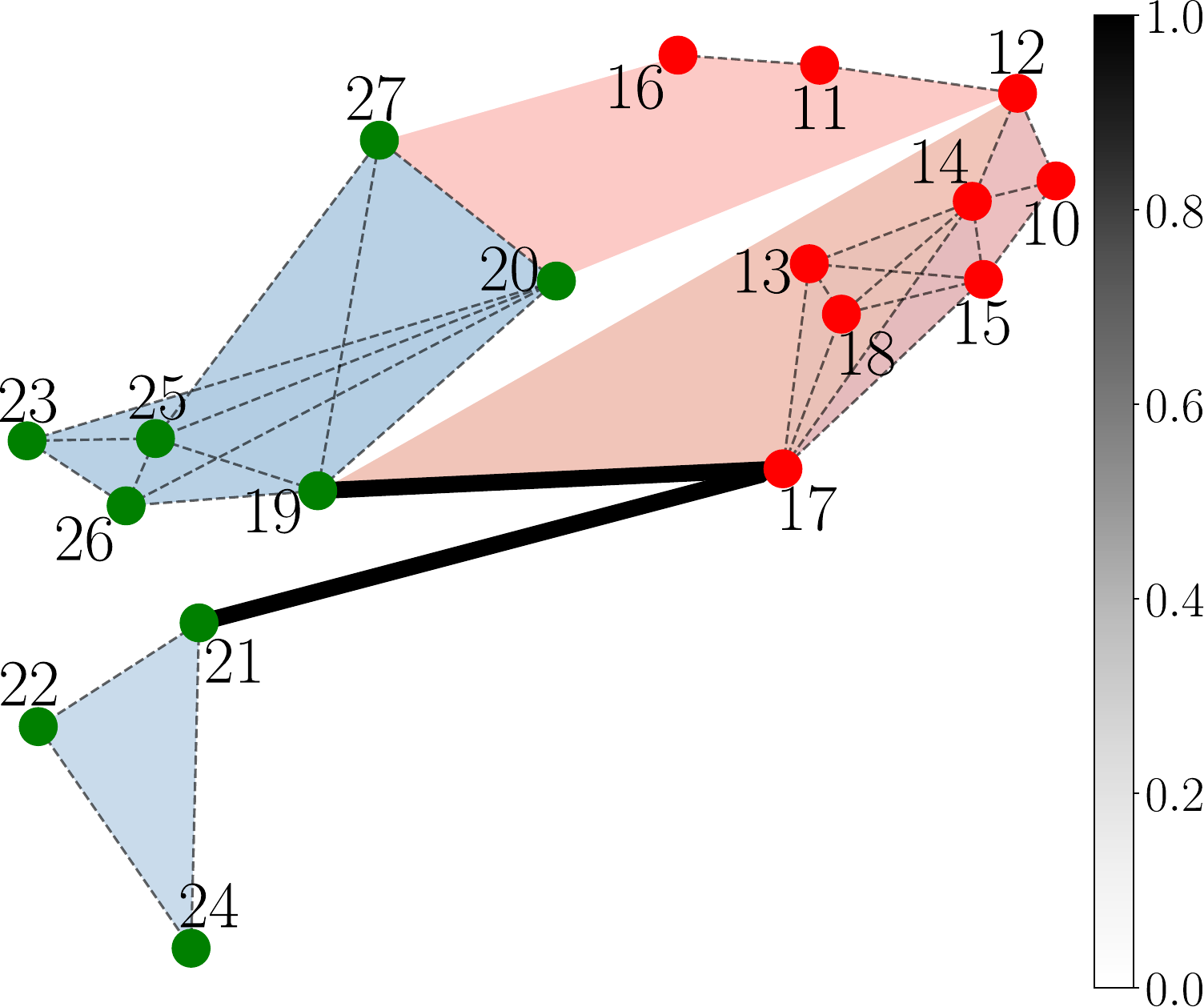}
          \caption{Second harmonic eigenvector of $\mL_{0,0}^{2,(3)}$}
    \end{subfigure}
    \hfill
    \begin{subfigure}[b]{0.3\textwidth}
        \includegraphics[width=1.0\textwidth, height=3.0cm]{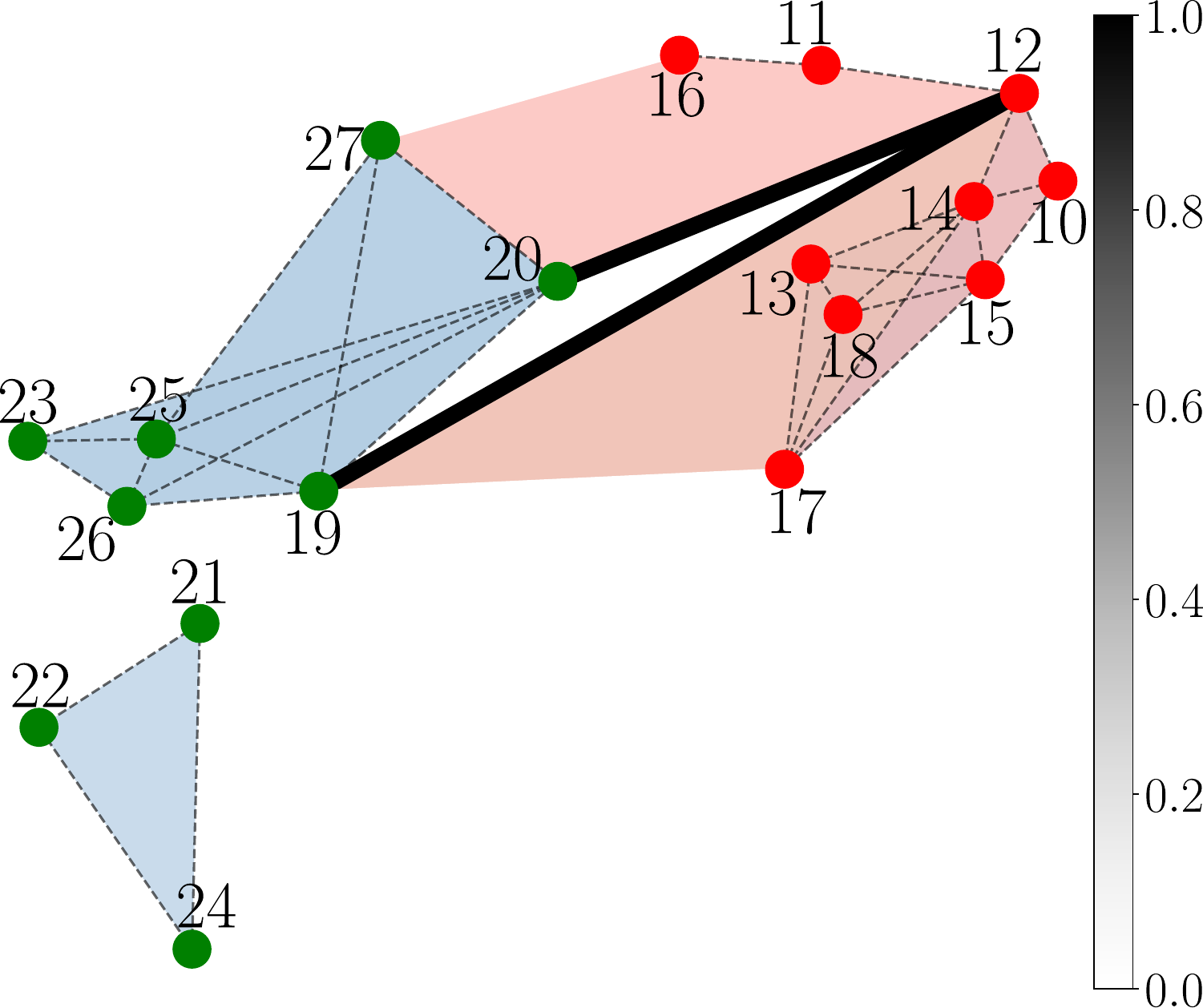}
        \caption{Third harmonic eigenvector of $\mL_{0,0}^{2,(3)}$}
    \end{subfigure}

    \caption{(a) Illustrative example of a $3$-layers CMC. Magnitude of the harmonic eigenvectors associated with: (b) and (c) the monolayer Hodge Laplacian $\mL_{1}$; (d), (e) and (f) the cross-Laplacian $\mL_{0,0}^{2,(3)}$.}
    \label{fig:harmonic_eigenvectors}
\end{figure*}\\
\textbf{Numerical results.}
To give more insights on how the cross-Laplacian bases reflect the invariants of the space, we illustrate in the example in Fig.  \ref{fig:harmonic_eigenvectors} the harmonic eigenvectors of both the monolayer first-order Laplacian matrix $\mL_1$  and the cross-Laplacian matrix $\mL_{0,0}^{2,(3)}$ considering the  CMC illustrated in  Fig.  \ref{fig:harmonic_eigenvectors}(a). The complex has two triangular holes $h_1,h_2$ identified by the nodes $h_1=(19,20,12)$ and $h_2=(24,21,17)$, then the harmonic subspace of $\mL_1$ has dimension $2$. 
In Figs. \ref{fig:harmonic_eigenvectors}(b) and (c) we represent over the edges the absolute magnitude of the first- and second-harmonic eigenvectors of $\mL_{1}$. It can be observed that  both eigenvectors tend to identify cycles around holes, albeit the first eigenvector is more tightly localized around the hole $h_2$. In Figs. \ref{fig:harmonic_eigenvectors}(d), (e) and (f)  we represent the first, second and third harmonic eigenvectors of $\mathbf{L}_{0,0}^{2,(3)}$, respectively. Note that the harmonic subspace of $\mL_{0,0}^{2,(3)}$ is spanned by three eigenvectors since in the cross-complex between layers $2$ and $3$ we have  the two closed cones $h_1$ and $h_2$ and the open cone $h_3=(24,21,17)$. It can be observed as all the eigenvectors are highly localized around the holes and the cones of the cross-complex $\mathcal{X}^{2,3}$.

Finally, to investigate the main benefits of the local processing in terms of signal space dimensionality reduction, we consider the problem of finding a sparse representation for a set of observed cross-edge signal $\by_1^{\ell,m}(i)$, $i=1,\ldots,M$, with a prescribed signal representation error. Then, we consider two cases: i) we assume a global monocomplex  representation of the CMC by finding a sparse representation of the signal $\by_1^{\ell,m}$  over the eigenvector bases $\mU_1$; ii) we consider a CMC multilayer structure to find a  sparse representation of the cross-edge signal $\by_1^{\ell,m}$  over the eigenvector bases $\mU_{0,0}^{(\ell),m}$ (or $\mU_{0,0}^{\ell,(m)}$). In both cases, we solve a basis pursuit problem with the goal of finding an optimal trade-off between the sparsity of the representation and the data-fitting error. Hence, in case i) for any given observed edge vector $\by_1(m)\in \mathbb{R}^E$, $m=1,\ldots,M$, we derive the sparse vector $\hat{\bs}_1$ as solution of the following basis pursuit problem
\cite{Donoho98}:
\beq \label{eq:bas_pur_L1}
\begin{array}{lll}
 \underset{\hat{\bs}_1 \in \mathbb{R}^E}{\text{min}} & \parallel
\hat{\bs}_1\parallel_1   \qquad \qquad (\mathcal{B}_1)\\
 \; \; \text{s.t.} & \parallel
 {\by}_1 -\mU_1 \hat{\bs}_1\parallel_F \leq \epsilon
 \end{array}
\eeq
 where $\epsilon$ is a required bound on the signal fitting error. Similarly, in case ii) we solve the following problem 
\beq \label{eq:bas_pur_L00}
\begin{array}{lll}
 \underset{\hat{\bs}_1^{\ell,m} \in \mathbb{R}^{N_{0,0}^{\ell,m}}}{\text{min}} & \parallel
\hat{\bs}_1^{\ell,m}\parallel_1   \qquad \qquad (\mathcal{B}_{0,0})\\
 \; \; \text{s.t.} & \parallel
 {\by}_1^{\ell,m} -\mU_{0,0}\hat{\bs}_1^{\ell,m}\parallel_F \leq \epsilon_{0,0}
 \end{array}
\eeq
where $\mU_{0,0}=\mU_{0,0}^{(\ell),m}$ (or $\mU_{0,0}^{\ell,(m)}$) and $\epsilon_{0,0}=\epsilon \, N_{0,0}^{\ell,m}/E$ is a maximum value on the fitting error scaled according to the signal dimension. 

As numerical test, we derive  the normalized mean squared error defined as $\text{NMSE}=\frac{\parallel {\by}_1^{\ell,m}-\overline{{\by}}_1^{\ell,m}\parallel_F }{\parallel {\by}_1^{\ell,m}\parallel_F}$ 
 using the estimated cross-signals $\bar{{\by}}_1^{\ell,m}$ obtained by solving both problems $\mathcal{B}_1$ and  $\mathcal{B}_{0,0}$. We generate the observed edge signals $\by_1(i)$, $i=1,\ldots,M$, from a standard Gaussian distribution. Specifically,  we focus on the cross-edge signal between layers $\ell=2$ and $m=3$ of the CMC illustrated in Fig.  \ref{fig:harmonic_eigenvectors}(a) and then consider the cross-Laplacian matrix $\mL_{0,0}^{2,(3)}$. We estimate   first the global edge signal $\bar{{\by}}_1=\mU_1 \hat{\bs}_1$ by solving problem $\mathcal{B}_1$ for each observed signal $\by_1(i)$. Then, we  extract the estimated  cross-edge signal as $\bar{{\by}}_1^{2,3}=\mathbf{D}^{2,3}\bar{{\by}}_1$  where $\mathbf{D}^{2,3}$ is a diagonal matrix with entries equal to $1$  for the cross-edges between layer  $2,3$ and $0$ otherwise. Hence, we solve the problem $\mathcal{B}_{0,0}$ considering the observed cross-edge signals ${{\by}}_1^{2,3}=\mathbf{D}^{2,3}{{\by}}_1$  to derive the  optimal solution $\bar{{\by}}_1^{2,3}=\mU_{0,0}^{2,(3)}\hat{\bs}_1^{2,3}$.
 In Fig. \ref{fig:sparsity} we illustrate the NMSE averaged over $M=1000$ edge signal realizations versus the sparsity of the signal. The sparsity is measured as the number of eigenvectors of $\mU_1$ and $\mU_{0,0}^{2,(3)}$ that are used as dictionary bases by solving, respectively, problems $\mathcal{B}_{1}$ and $\mathcal{B}_{0,0}$.  It can be noticed that the cross-Laplacian eigenvectors provide an efficient dictionary basis to represent the local signal $\by_1^{2,3}$,  achieving a better accuracy/sparsity  trade-off with respect to the overcomplete dictionary provided by the eigenvectors of the first-order Laplacian  of the global structure.   
\begin{figure}[t]
\centering
\includegraphics[width=8.5cm,height=5.0cm]{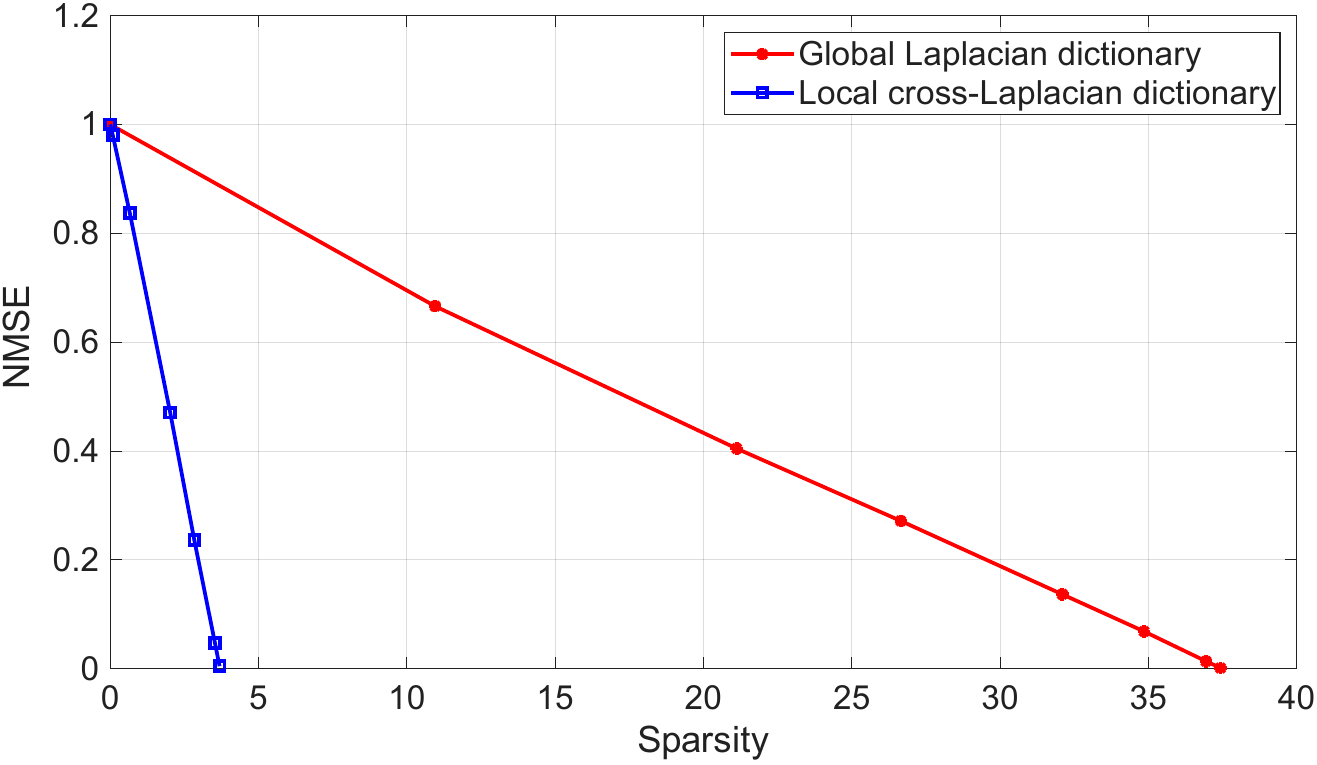}
\caption{NMSE versus signal sparsity.}
\label{fig:sparsity}
\end{figure}
\vspace{-0.3cm}

\section{Estimation of cross-edge signals}
\label{sec: signal_estim}
In this section we focus on the optimal estimation of the cross-edge signals represented by the Hodge decompositions in (\ref{eq:Hodge_L00_ell}) and (\ref{eq:Hodge_L00_m}).   We design optimal signal estimators from observed noisy cross-signals expressed as
$\by_1^{\ell,m}=\bs_1^{\ell,m}+\bn_1$
where the additive noise vector $\bn_1$  follows a Gaussian distribution, i.e. $\bn_1 \sim \mathcal{N}(\mathbf{0}, \sigma_n^2 \mathbf{I})$. The optimal node, $2$-cells and harmonic   signals, can be derived \cite{barb_2020} as the solutions of the following problem
\beq \nonumber
\begin{array}{lllll}
\!\!\underset{\vspace{0.04cm}\!\substack{ \vspace{0.07cm}\\\bs_{0}^{\ell}\in {\mathbb{R}}^{N_{\ell}^{}},\bs_{2}^{\ell,m} \in \mathbb{R}^{ {N_{0,1}^{\ell,m}}}  \\ \bs_{1,H}^{\ell,m}  \scriptstyle{\in \mathbb{R}^{N_{0,0}^{\ell,m}}  }}}{\min \medskip}  \!\!\!\!\!\! \!\!\!\!\!\! \parallel  \mB_{0,0}^{\ell,(m) \, T} \!\bs_0^{\ell}\!\!+ \mB_{0,1}^{\ell,(m)} \bs_2^{\ell,m}\!\!+\bs_{1,H}^{\ell,m} -\by_1^{\ell,m} \parallel^2 \medskip\\
\quad \quad \text{s.t.} \quad  \quad \mB_{0,0}^{\ell,(m)} \bs_{1,H}^{\ell,m}=\mathbf{0}, \quad \mB_{0,1}^{\ell,(m)  \, T} \bs_{1,H}^{\ell,m}=\mathbf{0}   \quad \quad (\mathcal{P}_u).
\end{array}
\eeq
It can be easily proved that this problem admits the following closed-form solutions \cite{barb_2020}:
\beq
\label{eq:closed_form}
\begin{array}{lll}
\bar{\bs}_{0}^{\ell}=(\mB_{0,0}^{\ell,(m) } \mB_{0,0}^{\ell,(m)\, T})^{\dagger} \mB_{0,0}^{\ell,(m)} \by_{1}^{\ell,m}\\
\bar{\bs}_{2}^{\ell,m}=(\mB_{0,1}^{\ell,(m) \, T} \mB_{0,1}^{\ell,(m)})^{\dagger} \mB_{0,1}^{\ell,(m)\, T} \by_{1}^{\ell,m}\\
\bar{\bs}_{1,H}^{\ell,m}=\by_1^{\ell,m}-
\mB_{0,0}^{\ell,(m) \, T}\bar{\bs}_{0}^{\ell}-\mB_{0,1}^{\ell,(m)}\bar{\bs}_{2}^{\ell,m}
\end{array}
\eeq
where ${}^{\dagger}$ denotes the Moore-Penrose pseudo-inverse. 

As  numerical test, we consider the CMC illustrated in Fig.  \ref{fig:harmonic_eigenvectors}(a)  composed of $N=27$, $E=61$ and $C=49$ nodes, edges and $2$-order cells, respectively. We filled the $(0,1)$ cross-cell with vertices $21,24,17$.  Our goal is to estimate the cross-edge signal between layer $2$ (red nodes) and layer $3$ (green nodes). We consider the $\mL_{0,0}^{2,(3)}$ cross-Laplacian  and the associated Hodge decomposition of the cross-edge signal given in (\ref{eq:Hodge_L00_m}). Hence,  we generate $M=5000$ random Gaussian edge signal vectors $\by_1(i)\sim \mathcal{N}(\mathbf{0},\sigma_1^2 \mathbf{I})$ for $i=1,\ldots,M$, assuming the model in (\ref{eq:Hodge_L00_m}) for the cross-edge signal. Therefore, we estimate the cross-solenoidal and cross-irrotational components of the  signals using the closed form solutions in (\ref{eq:closed_form}), obtaining $\bar{\by}_{cr\text{-}irr}^{2,3}(i)=\mB_{0,0}^{2,(3)} \bar{\bs}_0^{2}(i)$ and 
$\bar{\by}_{cr\text{-}sol}^{2,3}(i)={\mB_{0, 1}^{2,(3)}}^T \bar{\bs}_2^{2,3}(i)$.
We report in Fig. \ref{fig:CMC_4}  the average normalized squared error $\text{NMSE}:= \sum_{i=1}^{M}\frac{\parallel \bar{\by}_1^{2,3}(i)-{\by}_1^{2,3}(i)\parallel}{ \parallel {\by}_1^{2,3}(i) \parallel M}$ versus the signal-to-noise ratio $\text{SNR}=\sigma_1^{2}/\sigma_n^{2}$ for both the cross-irrotational and cross-solenoidal signals. 
We compare our method with a monocomplex representation of the topological domain. Specifically, we consider the Hodge-Laplacian matrix $\mL_1$ associated with the overall cell complex and solve the associated problem $\mathcal{P}_u$ to find the estimated irrotational signal  $\bar{\by}_{1,irr}(i)=\mB_1^T \bar{\bs}_{0}(i)$  and the solenoidal component $\bar{\by}_{1,sol}(i)=\mB_2 \bar{\bs}_{2}(i)$. Therefore, we derive the  average  NMSE in the estimation of the cross-edge signal between layers $2$ and $3$, for both the irrotational and solenoidal components.
From Fig. \ref{fig:CMC_4} we can observe as the cross-Laplacian  based  representation ensures better performance in terms of recovering  error for both the  cross-irrotational and cross-solenoidal signals.


\begin{figure}[t]
\centering
\includegraphics[width=8.3cm,height=5.5cm]{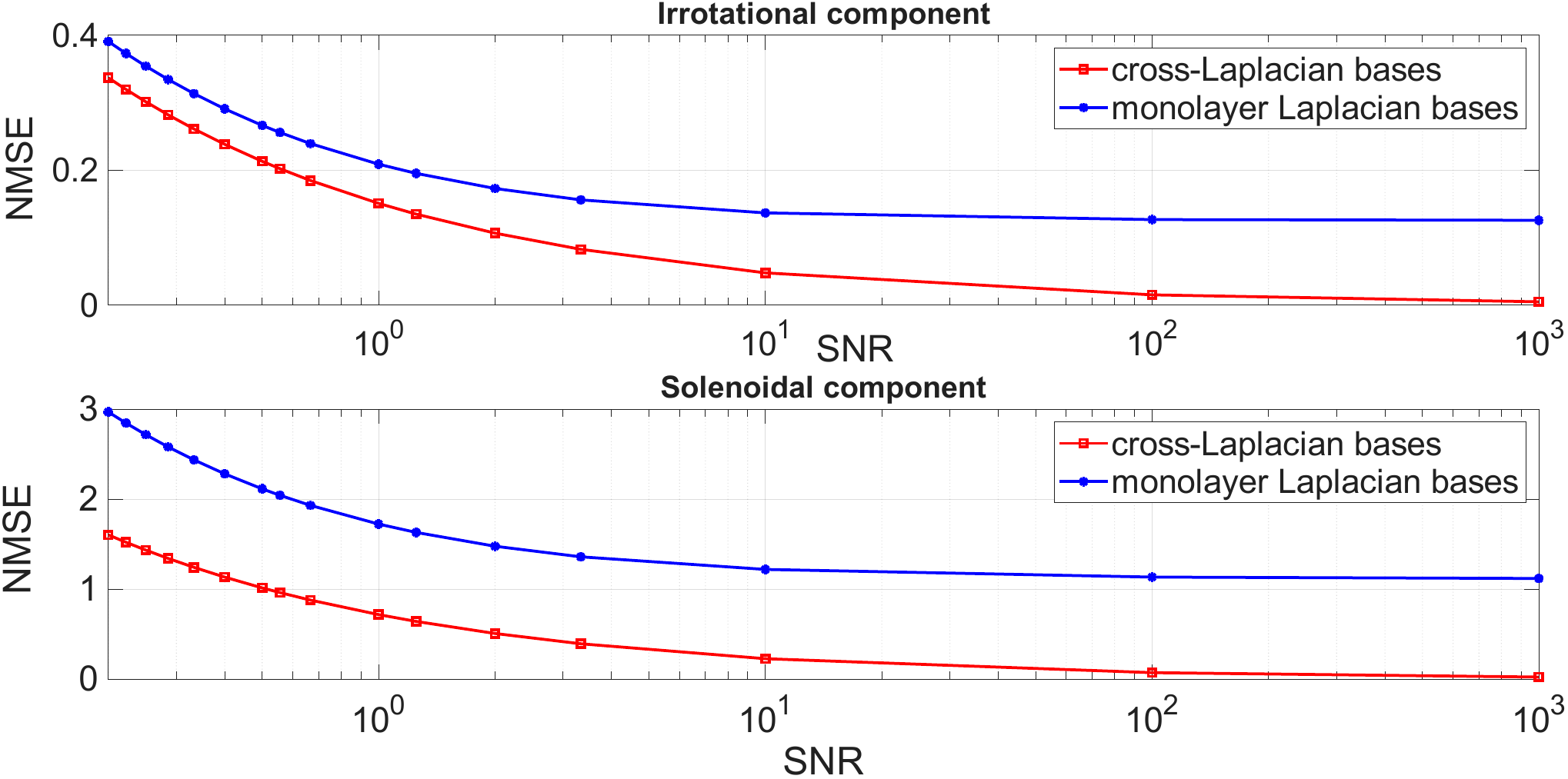}
\caption{Normalized mean squared error versus SNR.}
\label{fig:CMC_4}
\end{figure}

\section{Learning the CMC topology from local data}
\label{sec: signal_learning}
The inference of the topological domain that underlies the observed data is  a key step  for the processing of signals when the topology is unknown. Several works have addressed the problem of learning the simplicial/cell complex structure hinging on Hodge-Laplacian representation of signals \cite{barb_2020},\cite{sardellitti2024topological},\cite{hoppe2024representing},\cite{gurugubelli2024simplicial}. 
By extending the method developed in \cite{sardellitti2024topological} for cell complexes, we propose a strategy to  learn the cell multi-complexes topology based on cross-edge signals smoothness.\\
Let us focus, without loss of generality, on the inference of the $(0,0)$-cross Laplacian matrix in (\ref{ew_L_00_ell})
\beq 
\mL_{0,0}^{(\ell),m}= (\mB_{0,0}^{(\ell),m})^{T} \mB_{0,0}^{(\ell),m}  +\mB_{1,0}^{(\ell),m}(\mB_{1,0}^{(\ell),m})^{T}. \eeq
We assume that the graph between layers $\ell,m$ is known so that our task is learning the $2$-order $(1,0)$ cross-cells between the two layers from the observation of a set of $M$   cross-edge signals $\bx_{1}^{\ell,m}(i)$ for $i=1,\ldots,M$.
This implies that we know the lower cross-Laplacian matrix, i.e. the  matrix $\mB_{0,0}^{(\ell),m}$ and we aim to estimate the upper cross-incidence matrix $\mB_{1,0}^{(\ell),m}$ spanning the cross-curl subspace.

Our first step is to check if the inference of this matrix is meaningful for the observed data, i.e. if the observed signals have components onto the cross-curl and harmonic subspaces. Considering the lower cross-Laplacian matrix  $\mathbf{{L}}_{d \, 0,0}^{(\ell),m}:=(\mB_{0,0}^{(\ell),m})^{T}\mB_{0,0}^{(\ell),m}$, 
we first project the signal  $\mX_{1}^{\ell,m}=[\bx_{1}^{\ell,m}(1),\bx_{1}^{\ell,m}(2),\ldots,\bx_{1}^{\ell,m}(M)]$ onto the subspace orthogonal to the cross-irrotational space, by computing the signal energy, i.e. 
$
\parallel \mX_{1,0}^{\ell,m} \parallel_{F}^{2}=\parallel (\mI-\mU_{d,0}^{(\ell),m}(\mU_{d,0}^{(\ell),m})^T) \mX_{1}^{\ell,m} \parallel_{F}^{2}
$
where $\mU_{d,0}^{(\ell),m}$ are the eigenvectors of $\mathbf{{L}}_{d \, 0,0}^{(\ell),m}$ associated with the non-zero eigenvalues.
Therefore, we compute the ratio $\eta=\parallel \mX_{1,0}^{\ell,m} \parallel_{F}^{2}/\parallel \mX_{1}^{\ell,m} \parallel_{F}^{2}$ and if $\eta$ is lower than   a given threshold, we set $\mB_{1,0}^{(\ell),m}=\mathbf{0}$, otherwise we proceed to learn the matrix $\mB_{1,0}^{(\ell),m}$. Let us now write the matrix $\mathbf{{L}}_{u \, 0,0}^{(\ell),m}=\mB_{1,0}^{(\ell),m} (\mB_{1,0}^{(\ell),m})^T$ 
in the form
\beq \label{eq:L_u_ck}
\mathbf{{L}}_{u \, 0,0}^{(\ell),m}=\sum_{k=1}^{N_{1,0}^{\ell, m}} a_k \mb_{1,0}^{(\ell),m}(k) (\mb_{1,0}^{(\ell),m}(k))^T
\eeq
 where $\mb_{1,0}^{(\ell),m}(k)$ denotes the $k$-th column of the matrix $\mB_{1,0}^{(\ell),m}$ and $a_k$ is a  binary variable assuming the value $1$ if the $(1,0)$ cross-cell of order $2$ is filled, and $0$ otherwise. Finally,
$N_{1,0}^{\ell, m}$ is the number of $2$-order $(1,0)$ cross-cells between layer $\ell,m$.
Hence, our goal is to infer the $(1,0)$ filled cross-cells such that  the variation of the cross-edge signal is minimized, i.e. the signal is smooth onto the upper cross-Laplacian eigenvectors. The cross-variation of the signals can be defined as 
\beq  \label{eq:P_u}
\begin{split}
 \text{CV}(\mX_{1,0}^{\ell,m})&:= \text{tr}\{(\mX_{1,0}^{\ell,m})^T \mathbf{{L}}_{u \, 0,0}^{(\ell),m}\mX_{1,0}^{\ell,m}\}
\end{split}
\eeq
and, using (\ref{eq:L_u_ck}), we  get the following optimization problem 
\beq \label{eq:bas_pur_L00}
\begin{array}{lll}
 \underset{{\ba} \in \{0,1\}^{N_{1,0}^{\ell,m}}}{\text{min}} & \!\!\ds \sum_{k=1}^{N_{1,0}^{\ell, m}} a_k \text{tr}\{(\mX_{1,0}^{\ell,m})^T\mb_{1,0}^{(\ell),m}(k) (\mb_{1,0}^{(\ell),m}(k))^T \mX_{1,0}^{\ell,m}\}   \medskip \\
 \; \; \;\text{s.t.} & \parallel
 \ba \parallel_0 \leq \gamma \qquad \qquad \qquad\qquad (\mathcal{P}_{c})
 \end{array}
\eeq
where $\gamma$ is a coefficient controlling the sparsity of $\ba$, i.e. the number of filled $(1,0)$ cones. Note that albeit problem $\mathcal{P}_{c}$ is non-convex, it admits a closed-form optimal solution. Specifically, defining  $\alpha_k=\text{tr}\{(\mX_{1,0}^{\ell,m})^T\mb_{1,0}^{(\ell),m}(k) (\mb_{1,0}^{(\ell),m}(k))^T \mX_{1,0}^{\ell,m}\}$  and sorting these coefficients in increasing order as $\alpha_{i_1},\alpha_{i_2},\ldots, \alpha_{i_{N_{0,0}^{\ell,m}}}$, the optimal  solution of problem $\mathcal{P}_c$ is obtained by  selecting the $(1,0)$ $2$-order cells  whose cross-circulation  is lower than a given threshold $\gamma$.  Therefore, denoting with  $\mathcal{C}_{s}$ the set of the optimal selected indexes, the  estimated upper cross-Laplacian matrix can be
derived from (\ref{eq:L_u_ck})
setting $a_k=1$ for $k \in \mathcal{C}_{s}$.
To test the effectiveness of the proposed topology inference method, in the following we consider a real-data application by learning higher-order interactions in brain networks.
\vspace{-0.3cm}
\subsection{Application: Inter-modules connectivity in brain networks}

The functional connectivity (FC)  of brain networks is typically organized into modular structures composed of groups of brain regions-of-interest (ROIs)  with highly correlated activity and forming distinct modules with specific functional connectivity patterns \cite{bullmore2009complex}. For the context of our illustrative application, in \cite{arbabyazd2020dynamic}, the authors  introduce meta-connectivity analysis  to identify modules of functional links that  co-vary over time. They also introduce the concept of trimers, i.e. pairs of (inter-modules) links incident on a common root region (meta-hub)   controlling the inter-modules relations. Interestingly, these trimers resemble the $(0,1)$ (or $(1,0)$)  cross-cells of order $2$ between two layers with the meta-hubs functioning as cross-hubs.  Motivated by this  interpretation, it is natural to apply the proposed CMC learning strategy to investigate the brain's second order  structure using real datasets.
We use the Human Connectome Project (HCP) public dataset \cite{HCP}, by selecting $300$ sets of $N=116$ resting-state functional MRI (rs-fMRI) time series obtained from a cohort of $300$ unrelated healthy individuals (young adults aged $21-35$ consisting of $144$ males and $156$ females).
Hence, we derive the edge signals as the correlation coefficient between the time series observed over the vertices of each edge. 
\begin{figure}[t]
    \centering
   
    \begin{subfigure}[b]{\columnwidth}
    \centering
        \includegraphics[width=0.9\textwidth, height=4.8cm]{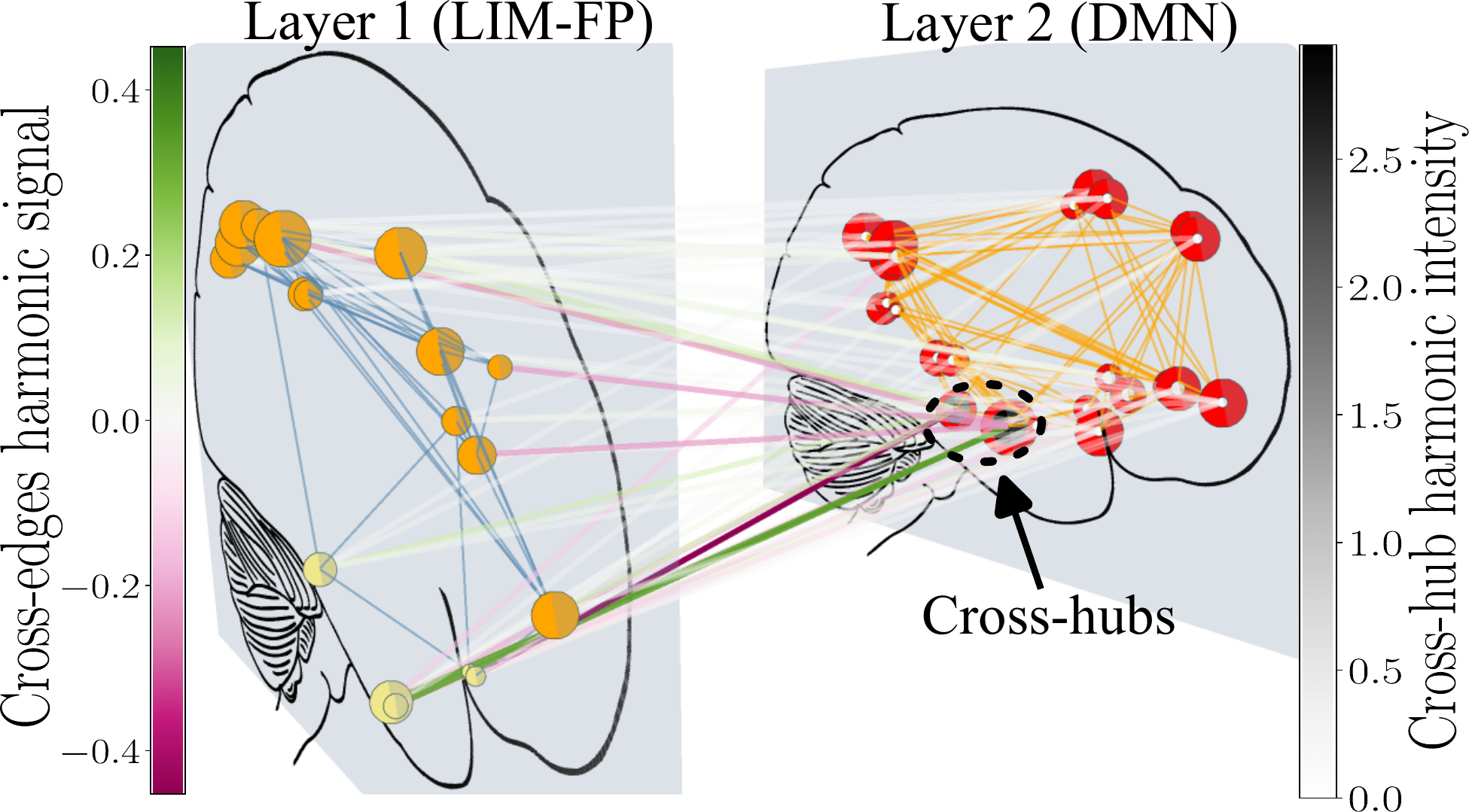}
        \caption{Cross-edge harmonic flows and cross-hub harmonic intensity.}
    \end{subfigure}
    \vspace{0.3cm}
    \begin{subfigure}[b]{\columnwidth}
    \centering
        \includegraphics[width=0.9\textwidth, height=4.8cm]{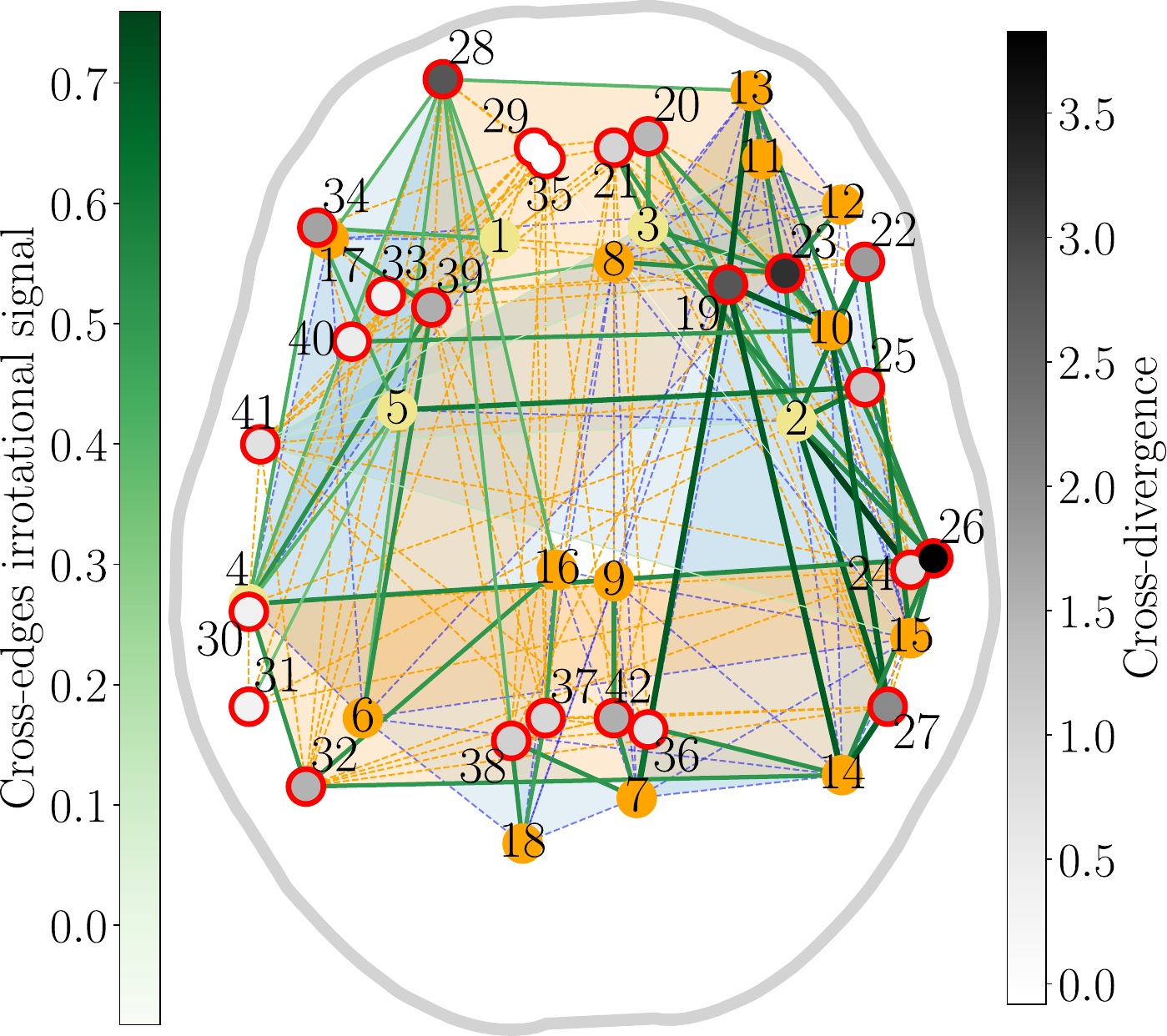}
        \caption{Cross-edges irrotational signal and node cross-divergence.}
    \end{subfigure}
    \vspace{0.3cm}
    \begin{subfigure}[b]{\columnwidth}
    \centering
        \includegraphics[width=0.88\textwidth, height=4.8cm]{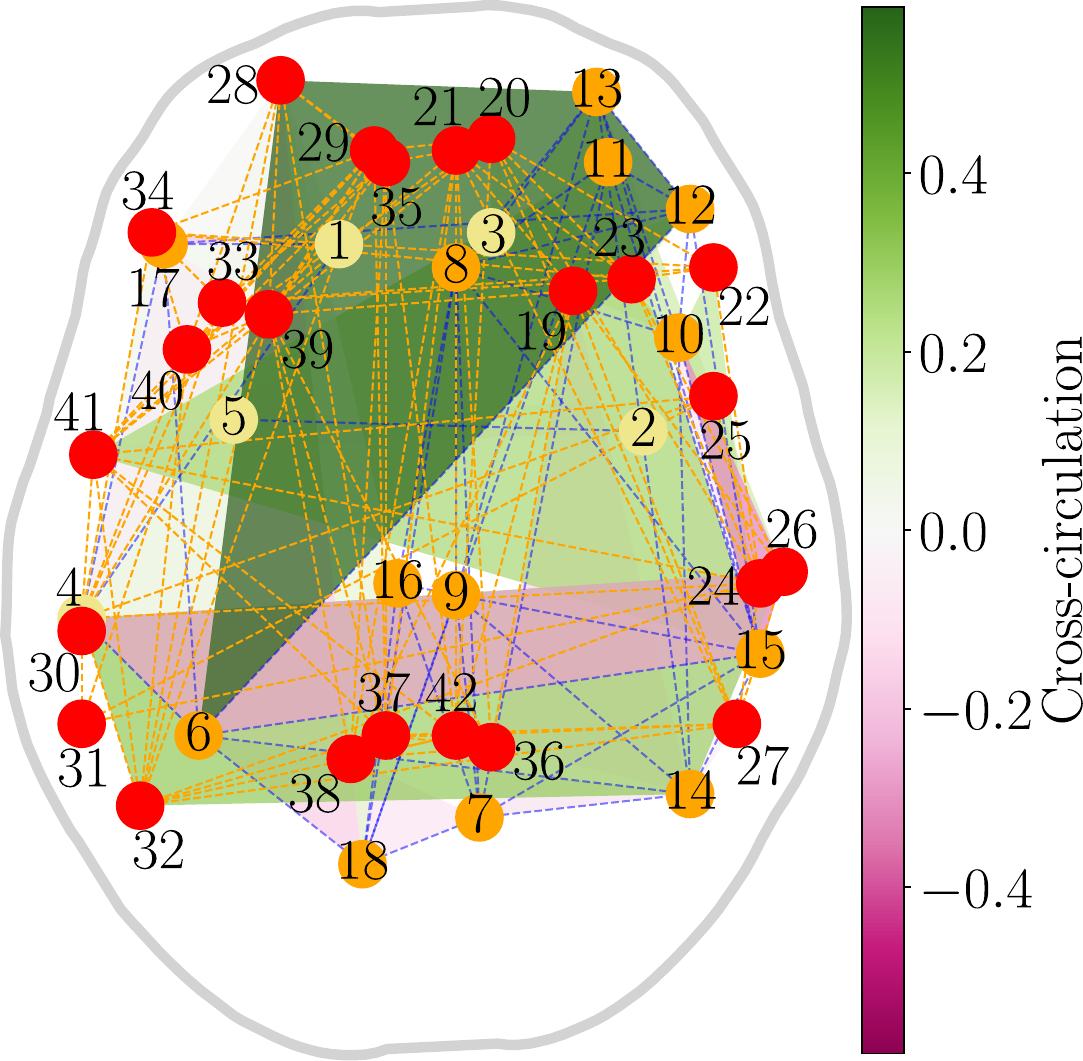}
          \caption{Cross-circulation observed over the $2$-order $(1,0)$ cross-cells.}
    \end{subfigure}
    
    \caption{Learned cross-cell complex between module $1$ and $2$  and recovered cross-edge signals.}
    \label{fig:brain_modules}
\end{figure}

Given that fMRI signals exhibit Gaussian behavior~\cite{hlinka}, we apply the graphical LASSO method to identify statistically significant edges~\cite{peel2022statistical}. This yields a functional connectivity matrix that captures both intra- and cross-module connections. For computational practical purposes, our illustrative analysis in this application is restricted to a brain CMC composed of two modules (as shown in Fig. \ref{fig:brain_modules}(a)), based on the schematic representation of a two-layer CMC as in Fig.~\ref{fig:CMC_2}. Our multilayer brain visualization setup were developed based on \cite{breedt2023multimodal}.
The first module consists of $N_1 = 18$ nodes from the Limbic (LIM) subnetwork, shown in khaki, and the Frontoparietal (FP) subnetwork, shown in orange. These nodes are interconnected by $N_{1,-1}^{1,2} = 45$ intra-module edges, depicted as blue lines. The second module includes $N_2 = 24$ nodes from the Default Mode Network (DMN), illustrated in red, connected by $N_{-1,1}^{1,2} = 92$ intra-module edges, shown as orange lines. The inter-module connectivity of the cross-layer graph is defined by $N_{0,0}^{1,2} = 66$ cross-edges, linking the nodes across layers. The diameter of each ROI reflects its node degree, and all ROIs are spatially projected onto a two-dimensional Euclidean space according to Schaefer's atlas coordinates~\cite{schaefer_atlas}. These subnets (DMN, FPN, LIM) were prioritized in this example due to their identification as core resting-state networks in large-scale parcellation studies~\cite{stephen, parcellation}, reflecting their interplay in baseline processes: self-referential cognition (DMN), cognitive flexibility (FP), and emotion integration (LIM). A topographical view of Fig.~\ref{fig:brain_modules}(a) is depicted in  Fig. 8(a) of Appendix C, with the solid outer gray contour delineating the brain’s boundaries. 
We learn the cross-complex structure by solving the optimization problem in  (\ref{eq:bas_pur_L00}) to infer
$N_{1,0}^{1,2}=34$  filled $(1,0)$ cross-cells. These are visualized  in Fig.~\ref{fig:brain_modules}(b) as 29 blue triangles and 5 orange quadrilaterals, forming the incidence matrix $\mB_{0,1}^{(1),2}$ between the two modules (see also the visualization in Fig. 8(a) of Appendix C). The cells $(1,0)$ have faces of order $1$ on layer $1$ and one node on layer $2$. We estimated the cross-Laplacian matrix \beq 
\hat{\mL}_{0,0}^{(1),2}= (\mB_{0,0}^{(1),2})^{T} \mB_{0,0}^{(1),2} +\hat{\mB}_{1,0}^{(1),2}(\hat{\mB}_{1,0}^{(1),2})^{T} 
\eeq
with cross-Betti number $\beta_{0,0}^{1,2}=9$, since we have $8$ open cones and one closed cone.  Hence, we found the spectral representation   of the harmonic, cross-irrotational and cross-solenoidal components of the observed time-series $\mX_{1}^{1,2}=[\mx_{1}^{1,2}(1),\mx_{1}^{1,2}(2),\ldots,\mx_{1}^{1,2}(M)]$ as
\beq
\begin{split}
&\hat{\mX}_{1,H}^{1,2}=(\hat{\mU}_{H,0}^{(1),2})^T{\mX}_{1}^{1,2}, \; \; {\mX}_{1,H}^{1,2}=\hat{\mU}_{H,0}^{(1),2}\hat{\mX}_{1,H}^{1,2}, \\ &\hat{\mX}_{1,d}^{1,2}=(\mU_{d,0}^{(1),2})^T{\mX}_{1}^{1,2}, \quad 
{\mX}_{1,d}^{1,2}=\mU_{d,0}^{(1),2}\hat{\mX}_{1,d}^{1,2}, \\ &\hat{\mX}_{1,u}^{1,2}=(\hat{\mU}_{u,0}^{(1),2})^T{\mX}_{1}^{1,2}, \quad {\mX}_{1,u}^{1,2}=\hat{\mU}_{u,0}^{(1),2}\hat{\mX}_{1,u}^{1,2}
\end{split}
\eeq
where $\hat{\mU}_{H,0}^{(1),2}$ are the eigenvectors spanning the kernel of $\hat{\mL}_{0,0}^{(1),2}$ (in this application of dimension $9$), $\mU_{d,0}^{(1),2}$ and $\hat{\mU}_{u,0}^{(1),2}$ are the eigenvectors associated with the non-zero eigenvalues of the Laplacians $\mathbf{{L}}_{d \, 0,0}^{(1),2}$ and $\mathbf{\hat{L}}_{u \, 0,0}^{(1),2}$, respectively. We averaged our results over $M=1000$  time series. In Fig. \ref{fig:brain_modules}(a), we report the cross-edge average harmonic component encoded by the solid edges in green-pink scale. The strongest harmonic flows are concentrated around the cones containing harmonic cross-hubs, i.e. the nodes $41, 26, 27, 21, 22, 34$ (as illustrated in Fig. 8(a) in Appendix  C). Further details on the associated brain ROIs and their labels are provided in  Table II  of Appendix  C. 
To quantify the strength of the harmonic cross-hubs, we compute for each hub $i$ the value $\text{CH}(i) = \frac{\sum_{j=1}^{n_i} |x_{1,H}(e_{ij})|}{\max(|\mathbf{x}_{1,H}|)}$, where $n_i$ denotes the number of cross-edges $e_{ij}$ that share the $i$-th hub and its surrounding cones.
Then, $\text{CH}(i)$ is the normalized sum of the intensities of the harmonic signals over the cross-edges of the cones having the hub as a common vertex. These values are visualized as grayscale dots over the $N_2$ nodes in Fig.~\ref{fig:brain_modules}(a) (with diameter proportional to their magnitude), and Fig. 8(a)  in Appendix  C. These findings indicate that specific nodes within the DMN play a key role in modulating or coordinating activity across networks in the first layer. Notably, nodes 41 and 26, located in the left and right temporal lobes, respectively (see Fig. 8(a) in Appendix  C), serve as controllers or connectors between regions of the FP and LIM subnetworks, which exhibit otherwise weak connectivity at the first layer, as shown in Fig.~\ref{fig:brain_modules}(a). This supports the interpretation that DMN nodes act as cross-network hubs, facilitating the integration of information between the FPN and LIM subnetworks. Such a role is consistent with the DMN’s known function as a connector hub in the brain’s network architecture~\cite{van2013network}, potentially enabling indirect FPN–LIM communication during resting-state. 

In Fig. \ref{fig:brain_modules}(b) we represent the cross-irrotational component and the cross-divergence on the nodes within the second  module, since we focus on the $(1,0)$ cross-cells. Interestingly, it can be noticed as almost all nodes in layer $2$ have a positive divergence. Specifically, nodes $26$ and $23$ exhibit the highest divergence values of $3.82$ and $3.20$, respectively. Node $35$ is the only sink node, having a small divergence value of $-0.082$. 
Therefore, the nodes exhibiting the highest magnitudes of cross-divergence can be interpreted as key sources or sinks of information flowing from layer 2 to layer 1. This interpretation aligns with the DMN recognized role in mediating information exchange between networks involved in cognitive control (FP) and emotion/memory processing (LIM)~\cite{menon2023dmn}.

Finally, in Fig. \ref{fig:brain_modules}(c) we illustrate the cross-circulation of the average cross-edge signal $\bar{{\mx}}_{1}^{1,2}=\sum_{i=1}^{M}{\mx}_{1}^{1,2}(i)/M$ derived as $\text{curl}_{cr}(\bar{{\mx}}_{1}^{1,2})= \hat{\mB}_{1,0}^{(1),2\, T} \bar{{\mx}}_{1}^{1,2}$ with $M=1000$. It can be observed that the  $(1,0)$ cross-cell identified by the nodes $(6,12,13,28)$  has the strongest cross-circulation, i.e it represents a quadruple of nodes from DMN and FP that are strongly coupled. 
This functional coupling is consistent with the integration of DMN and FPN during resting-state, which supports internally oriented cognitive processes such as autobiographical memory, planning, and self-referential thinking, as discussed in \cite{dixon2018heterogeneity}. Additionally, pronounced cross-circulation patterns involving nodes from the LIM, FPN, and DMN subnetworks, such as the most negative circulation observed over the cross-cell $(4,6,14,26)$, shown in Fig.~\ref{fig:brain_modules}(c) and  Fig. 8(b) in Appendix  C, suggest functional integration among these subsystems. This finding resonates with previous studies that emphasize the role of DMN in mediating interactions between emotions-related (LIM) and cognitive control-related (FP) processes~\cite{menon2023dmn, yeshurun2021default}. It is important to stress that this section is intended as an illustrative proof‑of‑concept of usability of the proposed framework and  we do not claim a definitive neuroscientific result.
Our aim is to demonstrate how the proposed cross‑Laplacian tools can expose integrative, higher‑order interactions between layers in a multilayer network; a full neurobiological interpretation deserves further investigation and it is beyond the scope of this theoretical contribution.



\vspace{-0.4cm}

\section{Conclusion}
 In this paper we present topological signal processing tools over cell multicomplexes spaces. We introduce CMCs that are novel topological spaces able to represent the intra- and inter-layer higher order connectivity among cell complexes. Then, we proposed an algebraic representation of CMCs  based on cross-Laplacian matrices which enables Hodge-based decompositions of the signals defined over these spaces. Hence, we extend topological signal processing tools to the analysis and processing of signals defined over CMCs. The developed framework enables  local processing of signals at different scales by considering  local homologies.
 We showed how this novel framework provides a powerful tool for detecting critical cross-hubs and cross-circulation patterns between brain subnetworks, offering a promising pathway for the development of interpretable markers of network integration across layers.
 In this paper we considered the $(0,0)$-cross-Laplacians, but future developments should  focus on  studying the homologies induced by $(k,n)$-cross-Laplacians  to fully characterize the local signal processing  over higher-order CMCs. 
 
\label{sec: conclusion}
\bibliographystyle{IEEEbib}
\bibliography{reference}

\begin{thebibliography}{10}

\bibitem{boccaletti2006complex}
S.~Boccaletti, V.~Latora, Y.~Moreno, M.~Chavez, and D.-U. Hwang,
\newblock ``Complex networks: Structure and dynamics,''
\newblock {\em Phys. rep.}, vol. 424, no. 4-5, pp. 175--308, 2006.

\bibitem{strogatz2001exploring}
S.~H. Strogatz,
\newblock ``Exploring complex networks,''
\newblock {\em Nature}, vol. 410, no. 6825, pp. 268--276, 2001.

\bibitem{kivela2014multilayer}
M.~Kivel{\"a}, A.~Arenas, M.~Barthelemy, J.~P. Gleeson, Y.~Moreno, and M.~A Porter,
\newblock ``Multilayer networks,''
\newblock {\em Journal of complex networks}, vol. 2, no. 3, pp. 203--271, 2014.

\bibitem{de2013mathematical}
M.~De~Domenico et~al.,
\newblock ``Mathematical formulation of multilayer networks,''
\newblock {\em Phys. Review X}, vol. 3, no. 4, pp. 041022, 2013.

\bibitem{bianconi2018multilayer}
G.~Bianconi,
\newblock {\em Multilayer networks: structure and function},
\newblock Oxford university press, 2018.

\bibitem{CRAINIC20221}
T.~G. Crainic, B.~Gendron, and M.~R. {Akhavan Kazemzadeh},
\newblock ``A taxonomy of multilayer network design and a survey of transportation and telecommunication applications,''
\newblock {\em Eur. J. Oper. Res.}, vol. 303, no. 1, pp. 1--13, 2022.

\bibitem{racz2024multilayer}
A.~R{\'a}cz-Szab{\'o}, T.~Ruppert, and J.~Abonyi,
\newblock ``Multilayer network-based evaluation of the efficiency and resilience of network flows,''
\newblock {\em Complexity}, vol. 2024, no. 1, pp. 6940097, 2024.

\bibitem{dickison2016multilayer}
M.~E. Dickison, M.~Magnani, and L.~Rossi,
\newblock {\em Multilayer social networks},
\newblock Cambridge University Press, 2016.

\bibitem{10.1093/gigascience/gix004}
M.~De~Domenico,
\newblock ``Multilayer modeling and analysis of human brain networks,''
\newblock {\em GigaScience}, vol. 6, no. 5, pp. gix004, 02 2017.

\bibitem{breedt2023multimodal}
L.~C. Breedt et~al.,
\newblock ``Multimodal multilayer network centrality relates to executive functioning,''
\newblock {\em Netw. Neurosci.}, vol. 7, no. 1, pp. 299--321, 2023.

\bibitem{liu2020robustness}
X.~Liu et~al.,
\newblock ``Robustness and lethality in multilayer biological molecular networks,''
\newblock {\em Nat. commun.}, vol. 11, no. 1, pp. 6043, 2020.

\bibitem{zhao2025constructing}
H.~Zhao, H.~Xu, T.~Wang, and G.~Liu,
\newblock ``Constructing multilayer {PPI} networks based on homologous proteins and integrating multiple {P}age{R}ank to identify essential proteins,''
\newblock {\em BMC bioinformatics}, vol. 26, no. 1, pp. 80, 2025.

\bibitem{krishnagopal2023topology}
S.~Krishnagopal and G.~Bianconi,
\newblock ``Topology and dynamics of higher-order multiplex networks,''
\newblock {\em Chaos, Solitons \& Fractals}, vol. 177, pp. 114296, 2023.

\bibitem{munkres2018elements}
J.~R. Munkres,
\newblock {\em Elements of algebraic topology},
\newblock CRC press, 2018.

\bibitem{Lim}
L.-H. Lim,
\newblock ``{H}odge {L}aplacians on graphs,''
\newblock {\em S. Mukherjee (Ed.), Geometry and Topology in Statistical Inference, Proc. Sympos. Appl. Math., 76, AMS}, 2015.

\bibitem{barb_2020}
S.~Barbarossa and S.~Sardellitti,
\newblock ``Topological signal processing over simplicial complexes,''
\newblock {\em IEEE Trans. Signal Process.}, vol. 68, pp. 2992--3007, March 2020.

\bibitem{sardellitti2024topological}
S.~Sardellitti and S.~Barbarossa,
\newblock ``Topological signal processing over generalized cell complexes,''
\newblock {\em IEEE Trans. Signal Process.}, vol. 72, pp. 687--700, 2024.

\bibitem{Roddenberry_23}
T.~M. Roddenberry, V.~P. Grande, F.~Frantzen, M.~T. Schaub, and S.~Segarra,
\newblock ``Signal processing on product spaces,''
\newblock in {\em IEEE Int. Conf. Acoust., Speech and Signal Process. (ICASSP)}, 2023, pp. 1--5.

\bibitem{moutuou2023}
E.~M. Moutuou, O.~B.~K. Ali, and H.~Benali,
\newblock ``Topology and spectral interconnectivities of higher-order multilayer networks,''
\newblock {\em Frontiers in Complex Systems}, vol. 1, pp. 1281714, 2023.

\bibitem{SardellittiCMC2024}
S.~Sardellitti, B.~C. Bispo, F.~A.~N. Santos, and J.~B. Lima,
\newblock ``Cross-{L}aplacians based topological signal processing over cell multicomplexes,''
\newblock in {\em 33rd European Signal Processing Conference (EUSIPCO) 2025, Palermo, Italy, Sept.}, 2025.

\bibitem{klette2000cell}
R.~Klette,
\newblock ``Cell complexes through time,''
\newblock in {\em Vision Geometry IX}. Int. Soc. for Opt. and Photon., 2000, vol. 4117, pp. 134--145.

\bibitem{barmak2011algebraic}
J.~A. Barmak,
\newblock {\em Algebraic topology of finite topological spaces and applications}, vol. 2032,
\newblock Springer, 2011.

\bibitem{grady2010}
L.~J. Grady and J.~R. Polimeni,
\newblock {\em Discrete calculus: Applied analysis on graphs for computational science},
\newblock Sprin. Scie. \& Busin. Media, 2010.

\bibitem{munkres2000topology}
J.~R. Munkres,
\newblock {\em Topology},
\newblock Prentice Hall, 2000.

\bibitem{goldberg2002combinatorial}
T.~E. Goldberg,
\newblock ``Combinatorial {L}aplacians of simplicial complexes,''
\newblock {\em Senior Thesis, Bard College}, 2002.

\bibitem{hatcher2005algebraic}
A.~Hatcher,
\newblock {\em Algebraic topology},
\newblock Cambr. Univ. Press, 2005.

\bibitem{Donoho98}
S.~S. Chen, D.~L. Donoho, and M.~A. Saunders,
\newblock ``Atomic decomposition by basis pursuit,''
\newblock in {\em SIAM J. Sci. Comput.}, 1998, vol.~20, pp. 33--61.

\bibitem{hoppe2024representing}
J.~Hoppe and M.~T. Schaub,
\newblock ``Representing edge flows on graphs via sparse cell complexes,''
\newblock in {\em Learning on Graphs Conference}. PMLR, 2024, pp. 1--1.

\bibitem{gurugubelli2024simplicial}
S.~Gurugubelli and S.~P. Chepuri,
\newblock ``Simplicial complex learning from edge flows via sparse clique sampling,''
\newblock in {\em 32nd European Signal Processing Conference (EUSIPCO)}, 2024, pp. 2332--2336.

\bibitem{bullmore2009complex}
E.~Bullmore and O.~Sporns,
\newblock ``Complex brain networks: graph theoretical analysis of structural and functional systems,''
\newblock {\em Nature reviews neuroscience}, vol. 10, no. 3, pp. 186--198, 2009.

\bibitem{arbabyazd2020dynamic}
L.~M. Arbabyazd, D.~Lombardo, O.~Blin, M.~Didic, D.~Battaglia, and V.~Jirsa,
\newblock ``Dynamic functional connectivity as a complex random walk: definitions and the d{FC}walk toolbox,''
\newblock {\em MethodsX}, vol. 7, pp. 101168, 2020.

\bibitem{HCP}
D.~C. {Van Essen}, S.~M. Smith, D.~M. Barch, T.~E.~J. Behrens, E.~Yacoub, and K.~Ugurbil,
\newblock ``The {WU}-{M}inn human connectome project: An overview,''
\newblock {\em NeuroImage}, vol. 80, pp. 62--79, 2013.

\bibitem{hlinka}
J.~Hlinka, M.~Paluš, M.~Vejmelka, D.~Mantini, and M.~Corbetta,
\newblock ``Functional connectivity in resting-state f{MRI}: Is linear correlation sufficient?,''
\newblock {\em NeuroImage}, vol. 54, no. 3, pp. 2218--2225, 2011.

\bibitem{peel2022statistical}
L.~Peel, T.~P. Peixoto, and M.~De~Domenico,
\newblock ``Statistical inference links data and theory in network science,''
\newblock {\em Nature Communications}, vol. 13, no. 1, pp. 6794, 2022.

\bibitem{schaefer_atlas}
A.~Schaefer et~al.,
\newblock ``Local-global parcellation of the human cerebral cortex from intrinsic functional connectivity {MRI},''
\newblock {\em Cerebral cortex (New York, N.Y. 1991)}, vol. 28, no. 9, pp. 3095--3114, 2018.

\bibitem{stephen}
S.~M. Smith et~al.,
\newblock ``Correspondence of the brain's functional architecture during activation and rest,''
\newblock {\em Proceedings of the National Academy of Sciences}, vol. 106, no. 31, pp. 13040--13045, 2009.

\bibitem{parcellation}
B.~T.~T. Yeo et~al.,
\newblock ``The organization of the human cerebral cortex estimated by intrinsic functional connectivity,''
\newblock {\em Journal of Neurophysiology}, vol. 106, no. 3, pp. 1125--1165, 2011.

\bibitem{van2013network}
M.~P. van~den Heuvel and O.~Sporns,
\newblock ``Network hubs in the human brain,''
\newblock {\em Trends in Cognitive Sciences}, vol. 17, no. 12, pp. 683--696, 2013.

\bibitem{menon2023dmn}
V.~Menon,
\newblock ``20 years of the default mode network: A review and synthesis,''
\newblock {\em Neuron}, vol. 111, no. 16, pp. 2469--2487, 2023.

\bibitem{dixon2018heterogeneity}
M.~L. Dixon et~al.,
\newblock ``Heterogeneity within the frontoparietal control network and its relationship to the default and dorsal attention networks,''
\newblock vol. 115, no. 7, pp. E1598--E1607, 2018.

\bibitem{yeshurun2021default}
Y.~Yeshurun, M.~Nguyen, and U.~Hasson,
\newblock ``The default mode network: where the idiosyncratic self meets the shared social world,''
\newblock {\em Nature Reviews Neuroscience}, vol. 22, no. 3, pp. 181--192, Mar. 2021.

\end{thebibliography}
\appendices{}
\vspace{0.3cm}
\section{Proof of Proposition $1$}
\label{sec:AppendixA}
 Given the (monolayer) cell complex $\mathcal{X}$, let us  denote by $\mB_q$ the boundary   matrix describing the incidence between cells of order $q-1$ and $q$ as in (1). It is known \cite{Lim} that the boundary of a boundary is zero, i.e. it holds \beq \label{eq:BqBq+1}\mB_q \mB_{q+1}=\mathbf{0}. \eeq
   It straightly follows  the local orthogonality condition:
    \beq \label{eq:BqkBq+1i}
    \mB_q(p,:) \mB_{q+1}(:,i)=0, \; \forall\, p,i
    \eeq
    where: the $i$-th column vector $\mB_{q+1}(:,i)$ identifies with its  entries (1 or -1) the $q$-cells $c_{q}(j)$ lower bounding $c_{q+1}(i)$, while   
    the entries of the $p$-th row vector $\mB_q(p,:)$ identifies the cells $c_{q}(j)$ upper bounding $c_{q-1}(p)$.      
    Given the indexes $p$ and $i$, equation (\ref{eq:BqkBq+1i}) can be  rewritten in the form
    \beq \label{eq:local_cond}
    \sum_{j \, : \, c_{q-1}(p)\prec_b c_{q}(j)\prec_b c_{q+1}(i)}\hspace{-1cm} 
    \text{sign}(c_{q-1}(p),c_{q}(j))\text{sign}(c_{q}(j),c_{q+1}(i))=0 
    \eeq
    where $\text{sign}(c_{q}(j),c_{q+1}(i))=1$ (or -1) if $c_{q}(j) \sim c_{q+1}(i) $ or $c_{q}(j) \nsim c_{q+1}(i)$.
    Therefore, given the cell $c_{q-1}(p)$  
    the condition of orthogonality  in (\ref{eq:BqkBq+1i}) is satisfied locally  for all cells upper bounding $c_{q-1}(p)$ and lower bounding $c_{q+1}(i)$.
    This implies that the local orthogonality condition is also verified by the cells in the cross- and intra-layers cell complexes. Specifically, considering the local boundary matrix in  (6) (or, equivalently, in (7)), from (\ref{eq:local_cond}) we get
   \beq \label{eq:local_cond1}
   \underset{j \, : \, \underset{\normalsize{c_{q}^{\ell,m}(j)\prec_b c_{q+1}^{\ell,m}(i)}}{\normalsize{c_{q-1}^{\ell,m}(p)\prec_b c_{q}^{\ell,m}(j)}}}{\sum}\hspace{-0.7cm}  \text{sign}( c_{q-1}^{\ell,m}(p),c_{q}^{\ell,m}(j)) \text{sign}(c_{q}^{\ell,m}(j),c_{q+1}^{\ell,m}(i))
   =0 
    \eeq
    with $c_{q+1}^{\ell,m}(i) \in \mathcal{X}_{k,n}^{\ell,m}$, $c_{q}^{\ell,m}(j) \in \mathcal{X}_{k-1,n}^{\ell,m}$ and $c_{q-1}^{\ell,m}(p) \in \mathcal{X}_{k-2,n}^{\ell,m}$. Therefore, (\ref{eq:local_cond1}) can be expressed in the form
    \beq \label{eq:BqkBq+1i2}
    \mB_{k,n}^{(\ell),m}(p,:) \mB_{k+1,n}^{(\ell),m}(:,i)=0, \; \forall p,i.
    \eeq
    This proves the condition i) in (11) and using similar derivations, we can also prove the second condition ii).

\vspace{0.8cm}
\section{Proof of Theorem $1$}
\label{sec:AppendixB}
Let us consider the cross-Betti number $\beta_{0,0}^{(\ell)}=\text{dim}(\text{ker}(\mL_{0,0}^{(\ell),m}))$. Focusing on a second order CMC, from (15), we get 
\beq \label{eq: Hodge_space_dec_00}
\mathbb{R}^{N_{0,0}^{(\ell),m} }\equiv \text{img}((\mB_{0,0}^{(\ell),m})^{T}) \oplus \text{ker}(\mL_{0,0}^{(\ell),m}) \oplus  \text{img}(\mB_{1,0}^{(\ell),m}), 
\eeq
so that  we easily derive
\beq \label{eq: betti_00}
\beta_{0,0}^{(\ell)}=N_{0,0}^{\ell,m} - \text{rank}((\mB_{0,0}^{(\ell),m})^{T}) - \text{rank}(\mB_{1,0}^{(\ell),m}).
\eeq
Note that the rank of the $N_{-1,0}^{\ell,m} \times N_{0,0}^{\ell,m}$ matrix $\mB_{0,0}^{(\ell),m}$ satisfies the equality  $\text{rank}((\mB_{0,0}^{(\ell),m})^{T})=\text{rank}(\mB_{0,0}^{(\ell),m}(\mB_{0,0}^{(\ell),m})^{T})$.  The matrix $\mB_{0,0}^{(\ell),m}(\mB_{0,0}^{(\ell),m})^{T}$ is a diagonal matrix with entries the node upper-cross degrees, then  \beq 
\label{eq:rank1}\text{rank}((\mB_{0,0}^{(\ell),m})^{T})=\text{rank}(\mB_{0,0}^{(\ell),m}(\mB_{0,0}^{(\ell),m})^{T})=n_0^{m} \eeq denoting with $n_0^{m}$ the number of nodes on layer $m$ that are connected with cross-edges.
Let us now derive the rank of
the $N_{0,0}^{\ell,m} \times N_{1,0}^{\ell,m}$ matrix $\mB_{1,0}^{(\ell),m}$. This matrix  has $N_{1,0}^{\ell,m}$ independent columns, then its rank is equal to  
\beq \label{eq:rank2}
\text{rank}(\mB_{1,0}^{(\ell),m})=\min(N_{1,0}^{\ell,m},N_{0,0}^{\ell,m})
\eeq
where  $N_{1,0}^{\ell,m}$ is the number  of $2$-order (filled) independent cross-cells with vertex on layer $m$ and sides  on layer $\ell$. Let us denote with $n_{0,0}^{m}(k)$ the number of cross-edges incident to the node $k$ on layer $m$ and  with $n_{c}^{m}(k)$ the number of possible cones (closed or open) incident to the node $k$. For each node, considering a $2$-order CMC, we  have $n^{m}_{0,0}(k)-1$ possible   independent convex wedges with $2$ cross-edges as boundaries. Since these wedges can be cones or  $2$-order cross-cells we get 
\beq n_{0,0}^{m}(k)-1=n_{c}^{m}(k)+N_{1,0}^{\ell, m}(k) \eeq
 where $N_{1,0}^{\ell, m}(k)$ is the number of cross-cells incident to node $k$ on layer $m$.
 Then, it holds
\beq N_{1,0}^{\ell, m}(k)=n_{0,0}^{m}(k)-1-n_{c}^{m}(k) \eeq
and summing on the nodes
we get
\beq \sum_{k=1}^{n_0^m}N_{1,0}^{\ell, m}(k)=\sum_{k=1}^{n_0^m} n_{0,0}^{m}(k)-n_0^m-\sum_{k=1}^{n_0^m} n_{c}^{m}(k) \eeq
i.e. 
\beq \label{eq: N_10}
N_{1,0}^{\ell, m}=N_{0,0}^{\ell,m}-n_0^m- n_{c}^{m} \geq 0. \eeq
From this last equality it follows that $N_{1,0}^{\ell, m}\leq N_{0,0}^{\ell,m}$, so that from (\ref{eq:rank2}) we get \beq \label{eq:rank_B10} \text{rank}(\mB_{1,0}^{(\ell),m})=N_{1,0}^{\ell,m}. \eeq
Replacing equations (\ref{eq:rank1}) and (\ref{eq:rank_B10}) in (\ref{eq: betti_00}), we get
\beq \label{eq: betti_00_1}
\beta_{0,0}^{(\ell)}=N_{0,0}^{\ell,m} - n_0^{m} -  N_{1,0}^{\ell, m}
\eeq
and  using the equality in (\ref{eq: N_10}), it holds 
\beq \label{eq: betti_00_1}
\beta_{0,0}^{(\ell)}=n_{c}^{m}.
\eeq
This proves that the Betti number $\beta_{0,0}^{(\ell)}$ returns the number of $(1,0)$ cones between layers $\ell,m$. Similar derivations hold for Betti number $\beta_{0,0}^{(m)}$.
\vspace{1cm}
\section{Supplementary Figures and Table}

\begin{figure*}[t]
    \centering
   
    \begin{subfigure}[b]{\columnwidth}
    \centering
        \includegraphics[width=0.9\textwidth]{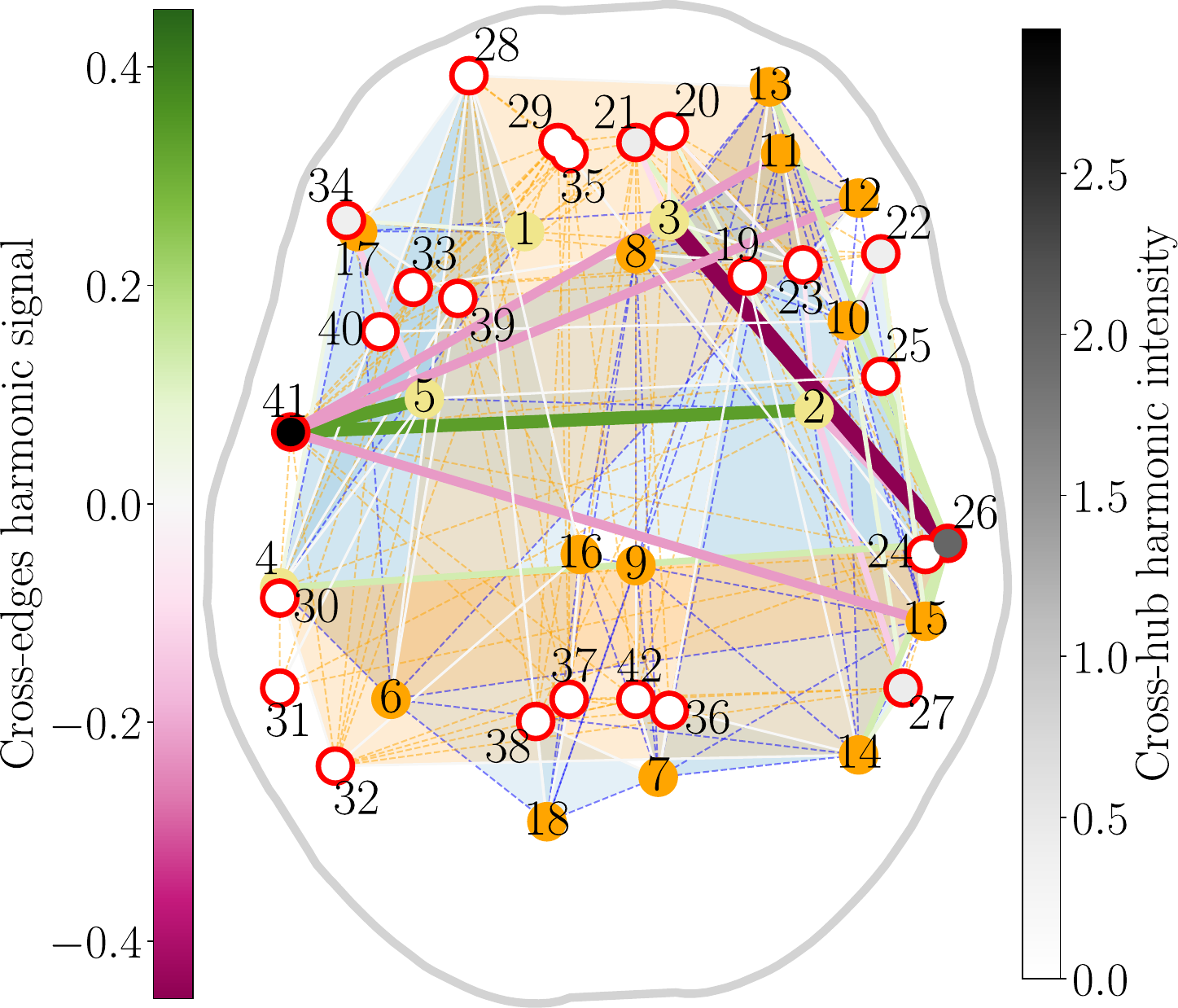}
        \caption{Cross-edge harmonic signal and cross-hub harmonic intensity.}
    \end{subfigure}
    \begin{subfigure}[b]{\columnwidth}
    \centering
        \includegraphics[width=0.9\textwidth]{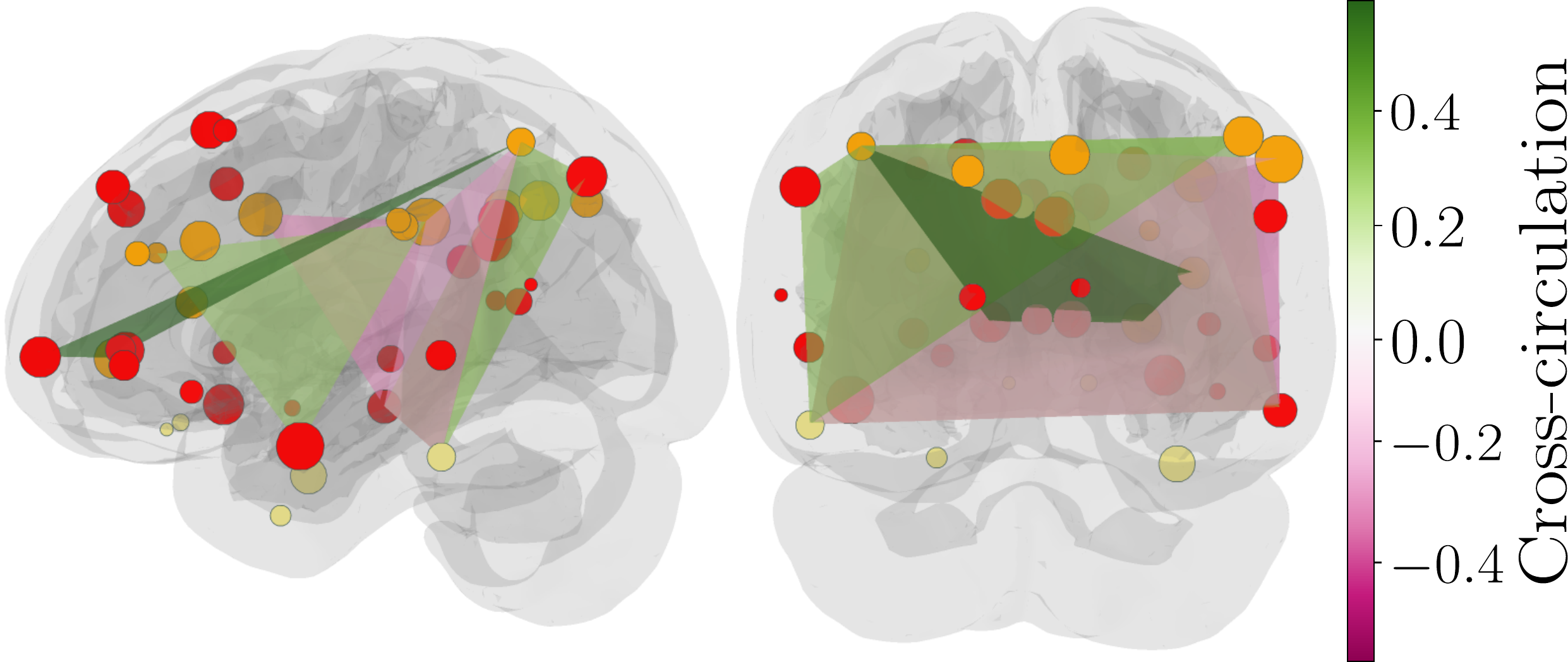}
        \caption{Strongest cross-circulations observed over the $2$-order $(1,0)$ cross-cells. A side view of the brain is shown on the left, and a rear view is presented on the right.}
    \end{subfigure}
    
    \caption{Supplementary figures providing alternative perspectives of Fig. 7.}
    \label{fig:brain_modules_supplementary}
\end{figure*}

\begin{table*}[h]
\centering
\caption{Information of the brain regions based on the Schaefer's brain parcellation~\cite{schaefer_atlas}.}
\begin{tabular}{|c|c|c|c|}
\hline
\textbf{Layer} & \textbf{Subnetwork} & \textbf{Region name} & \textbf{Label} \\
\hline
\multirow{18}{*}{Layer 1} 
  & \multirow{5}{*}{LIM} 
  & L\_Limbic\_Orbitofrontal\_Cortex\_1 & 1 \\
  &  & R\_Limbic\_TempPole\_1              & 2 \\
  &  & R\_Limbic\_Orbitofrontal\_Cortex\_1 & 3 \\
  &  & L\_Limbic\_TempPole\_2              & 4 \\
  &  & L\_Limbic\_TempPole\_1              & 5 \\
\cline{2-4}
& \multirow{13}{*}{FP} 
  & L\_Cont\_Par\_1                     & 6 \\
  &  & R\_Cont\_Precuneus\_1              & 7 \\
  &  & R\_Cont\_Prefrontal\_Cortex\_m\_p\_1 & 8 \\
  &  & R\_Cont\_Cingulate\_1              & 9 \\
  &  & R\_Cont\_Prefrontal\_Cortex\_l\_4   & 10 \\
  &  & R\_Cont\_Prefrontal\_Cortex\_l\_3   & 11 \\
  &  & R\_Cont\_Prefrontal\_Cortex\_l\_2   & 12 \\
  &  & R\_Cont\_Prefrontal\_Cortex\_l\_1   & 13 \\
  &  & R\_Cont\_Par\_2                     & 14 \\
  &  & R\_Cont\_Par\_1                     & 15 \\
  &  & L\_Cont\_Cingulate\_1              & 16 \\
  &  & L\_Cont\_Prefrontal\_Cortex\_l\_1   & 17 \\
  &  & L\_Cont\_Precuneus\_1              & 18 \\
\hline
\multirow{24}{*}{Layer 2} & \multirow{24}{*}{DMN} 
  & R\_Default\_Prefrontal\_Cortex\_d\_Prefrontal\_Cortex\_m\_3 & 19 \\
  &  & R\_Default\_Prefrontal\_Cortex\_d\_Prefrontal\_Cortex\_m\_2 & 20 \\
  &  & R\_Default\_Prefrontal\_Cortex\_d\_Prefrontal\_Cortex\_m\_1 & 21 \\
  &  & R\_Default\_Prefrontal\_Cortex\_v\_2                       & 22 \\
  &  & R\_Default\_Prefrontal\_Cortex\_v\_1                       & 23 \\
  &  & R\_Default\_Temp\_3                                       & 24 \\
  &  & R\_Default\_Temp\_2                                       & 25 \\
  &  & R\_Default\_Temp\_1                                       & 26 \\
  &  & R\_Default\_Par\_1                                        & 27 \\
  &  & L\_Default\_Prefrontal\_Cortex\_4                         & 28 \\
  &  & L\_Default\_Prefrontal\_Cortex\_5                         & 29 \\
  &  & L\_Default\_Temp\_2                                       & 30 \\
  &  & L\_Default\_Par\_1                                        & 31 \\
  &  & L\_Default\_Par\_2                                        & 32 \\
  &  & L\_Default\_Prefrontal\_Cortex\_1                         & 33 \\
  &  & L\_Default\_Prefrontal\_Cortex\_2                         & 34 \\
  &  & L\_Default\_Prefrontal\_Cortex\_3                         & 35 \\
  &  & R\_Default\_Precuneus\_Posterior\_Cingulate\_Cortex\_1    & 36 \\
  &  & L\_Default\_Precuneus\_Posterior\_Cingulate\_Cortex\_2    & 37 \\
  &  & L\_Default\_Precuneus\_Posterior\_Cingulate\_Cortex\_1    & 38 \\
  &  & L\_Default\_Prefrontal\_Cortex\_7                         & 39 \\
  &  & L\_Default\_Prefrontal\_Cortex\_6                         & 40 \\
  &  & L\_Default\_Temp\_1                                       & 41 \\
  &  & R\_Default\_Precuneus\_Posterior\_Cingulate\_Cortex\_2    & 42 \\
\hline
\end{tabular}
\label{tab:regions_info}
\end{table*}

\end{document}